\numberwithin{equation}{section}
\newcommand{\resetCurThmBraces}{%
\gdef\curThmBraceOpen{(}%
\gdef\curThmBraceClose{)}}
\newcommand{\removeThmBraces}{%
\gdef\curThmBraceOpen{}%
\gdef\curThmBraceClose{}}
\patchcmd{\thmhead}{(#3)}{\curThmBraceOpen #3\curThmBraceClose }{}{}
\providecommand{\catname}{\mathbf} 
\providecommand{\clsname}{\mathcal}
\providecommand{\oname}[1]{{\operatorname{\mathsf{#1}}}}
\def\defcatname#1{\expandafter\def\csname B#1\endcsname{\catname{#1}}}
\def\defcatnames#1{\ifx#1\defcatnames\else\defcatname#1\expandafter\defcatnames\fi}
\def\defclsname#1{\expandafter\def\csname C#1\endcsname{\clsname{#1}}}
\def\defclsnames#1{\ifx#1\defclsnames\else\defclsname#1\expandafter\defclsnames\fi}
\def\defbbname#1{\expandafter\def\csname BB#1\endcsname{{\bm{\mathsf{#1}}}}}
\def\defbbnames#1{\ifx#1\defbbnames\else\defbbname#1\expandafter\defbbnames\fi}
\def\Set{\catname{Set}}
\providecommand{\argument}{\operatorname{-\!-}}
\DeclareOldFontCommand{\bf}{\normalfont\bfseries}{\mathbf}
\providecommand{\mplus}{{\scriptscriptstyle\bf+}} 	       
\providecommand{\Id}{\operatorname{Id}}
\providecommand{\id}{\mathsf{id}}
\providecommand{\comp}{\mathbin{\circ}}
\providecommand{\iso}{\mathbin{\cong}}
\providecommand{\xto}[1]{\,\xrightarrow{#1}\,}
\providecommand{\dar}{\kern-1.2pt\operatorname{\downarrow}}	
\providecommand{\uar}{\kern-1.2pt\operatorname{\uparrow}}	
\providecommand{\fst}{\oname{fst}}
\providecommand{\snd}{\oname{snd}}
\providecommand{\pr}{\oname{pr}}
\providecommand{\brks}[1]{\langle #1\rangle}
\DeclareSymbolFont{Symbols}{OMS}{cmsy}{m}{n}
\DeclareMathSymbol{\iobj}{\mathord}{Symbols}{"3B}
\providecommand{\curry}{\oname{curry}}
\providecommand{\ev}{\oname{ev}}
\providecommand{\pacman}[1]{}					                     
\newcommand{\undefine}[1]{\let #1\relax}					                       
\providecommand{\mone}{{\text{\kern.5pt\rmfamily-}\mathsf{\kern-.5pt1}}}
\providecommand{\smin}{\smallsetminus}
\def\mfix#1{\oname{#1}\@ifnextchar\bgroup\@mfix{}}	       
\def\@mfix#1{#1\@ifnextchar\bgroup\mfix{}}			           
\providecommand{\case}[3]{\mfix{case}{\mathbin{}#1}{of}{#2}{\kern-1pt;}{\mathbin{}#3}}
\renewcommand{\L}{\mathcal{L}}
\newcommand{\Sigmas}{\Sigma^{\star}}
\newcommand{\ar}{\mathsf{ar}}
\newcommand{\NT}{\mathrm{Nat}}
\newcommand{\epito}{\twoheadrightarrow}
\newcommand{\seq}{\subseteq}
\newcommand{\ol}{\overline}
\newcommand{\trc}{{\mathsf{trc}}}
\newcommand{\fn}{{\mathsf{fn}}}
\newcommand{\beh}{{\mathsf{beh}}}
\newcommand{\C}{{\mathcal{C}}}
\renewcommand{\id}{{\mathsf{id}}}
\renewcommand{\Nat}{\mathds{N}}
\renewcommand{\NT}{\mathsf{Nat}}
\newcommand{\stimes}[1]{{S\product #1}}
\newcommand{\xtimes}[1]{{X\product #1}}
\newcommand{\ytimes}[1]{{Y\product #1}}
\newcommand{\sbrks}[1]{{ \brks{\fst,#1}  }}
\newcommand{\cpair}[2]{{ #1, #2 }}
\newcommand{\tpair}[2]{{ #1 \product #2 }}
\renewcommand{\brack}[1]{\lbrack #1 \rbrack}
\newcommand{\sbrack}[1]{\llbracket #1 \rrbracket}
\newcommand{\ssbrack}[1]{\llbracket\mkern-4.7mu\llbracket #1 \rrbracket\mkern-4.7mu\rrbracket}
\newcommand{\f}{\oname{f}}
\newcommand{\takeout}[1]{\empty}
\newcommand{\ini}{\iota}
\newcommand{\ter}{\tau}
\newcommand{\wh}{\widehat}
\DeclareMathOperator{\Coalg}{\mathsf{Coalg}}
\DeclareMathOperator{\Alg}{\mathsf{Alg}}
\newcommand{\cofun}{(\overline{\rule{-1pt}{7pt}\argument})}
\renewcommand{\rho}{\varrho}
\newcommand{\opp}{\mathsf{op}}
\newcommand{\xla}[1]{\xleftarrow{~#1~}}
\newcommand{\mypowfin}{\mathscr{P}_\omega}
\newcommand{\pullbackangle}[2][]{\arrow[phantom,to path={
                     -- ($ (\tikztostart)!1cm!#2:([xshift=8cm]\tikztostart) $)
                        node[anchor=west,pos=0.0,rotate=#2,
                        inner xsep = 0]
                        {\begin{tikzpicture}[minimum
                        height=1mm,baseline=0,#1]
    \draw[-] (0,0) -- (.5em,.5em) -- (0,1em);
                        \end{tikzpicture}}}]{}}
\renewcommand{\comp}{\cdot}
\renewcommand{\c}{\colon}
\tikzset{
    commutative diagrams/.cd,
    arrow style=tikz,
    diagrams={>=stealth},
    row sep=large,
    column sep = huge
}
\newcommand{\isos}{stateful SOS\xspace} 
\newcommand{\Isos}{Stateful SOS\xspace}
\theoremstyle{definition}
\newtheorem{defn}[theorem]{Definition} 
\newtheorem{rem}[theorem]{Remark} 
\newtheorem{notn}[theorem]{Notation} 
\newcommand{\xra}[1]{\xrightarrow{~#1~}}
\renewcommand{\xto}{\xra}
\let\xmpsto=\xmapsto
\renewcommand{\xmapsto}[1]{\xmpsto{~#1~}}
\newcommand{\V}{\mathcal{V}}
\title{Stateful Structural Operational Semantics}
\author{Sergey Goncharov}{Friedrich-Alexander-Universität Erlangen-Nürnberg,
  Germany}{sergey.goncharov@fau.de}{https://orcid.org/0000-0001-6924-8766}
  {Funded by the Deutsche Forschungsgemeinschaft (DFG, German Research
  Foundation) -- project number 215418801}
\author{Stefan Milius}{Friedrich-Alexander-Universität Erlangen-Nürnberg,
  Germany}{stefan.milius@fau.de}{https://orcid.org/0000-0002-2021-1644}
{Funded by the Deutsche Forschungsgemeinschaft (DFG, German Research
  Foundation) -- project number 259234802} 
\author{Lutz Schr{\"o}der}{Friedrich-Alexander-Universität
  Erlangen-Nürnberg, Germany}{lutz.schroeder@fau.de}{https://orcid.org/0000-0002-3146-5906}
{Work supported by Deutsche Forschungsgemeinschaft (DFG, German Research Foundation) as 
part of the Research and Training Group 2475 
(grant number 393541319/GRK2475/1-2019)}
\author{Stelios Tsampas}{Friedrich-Alexander-Universität Erlangen-Nürnberg,
  Germany}{stelios.tsampas@fau.de}{https://orcid.org/0000-0001-8981-2328}
{Funded by the Deutsche Forschungsgemeinschaft (DFG, German Research
  Foundation) -- project number 419850228} 
\author{Henning Urbat}{Friedrich-Alexander-Universität Erlangen-Nürnberg,
  Germany}{henning.urbat@fau.de}{https://orcid.org/0000-0002-3265-7168}
{Funded by the Deutsche Forschungsgemeinschaft (DFG, German
Research Foundation) -- project number 419850228}
\authorrunning{S.~Goncharov, S.~Milius, L.~Schr{\"o}der, S.~Tsampas, H.~Urbat} 
\keywords{Structural Operational Semantics, Rule Formats, Distributive Laws} 
\begin{document}
\let\subsectionautorefname\sectionautorefname

\maketitle

\begin{abstract}
  Compositionality of denotational semantics is an important concern
  in programming semantics. Mathematical operational semantics in the
  sense of Turi and Plotkin guarantees compositionality, but seen from
  the point of view of stateful computation it applies only to very
  fine-grained equivalences that essentially assume unrestricted
  interference by the environment between any two statements. We
  introduce the more restrictive \emph{\isos{}} rule format for
  stateful languages. We show that compositionality of two more
  coarse-grained semantics, respectively given by assuming read-only
  interference or no interference between steps, remains an
  undecidable property even for \isos{}. However, further restricting
  the rule format in a manner inspired by the \emph{cool} GSOS formats
  of Bloom and van Glabbeek, we obtain the \emph{streamlined} and
  \emph{cool} \isos{} formats, which respectively guarantee
  compositionality of the two more abstract equivalences.

\end{abstract}

\section{Introduction}
\label{sec:intro}

A key prerequisite for modular reasoning about process calculi and
programming languages is \emph{compositionality}: A denotational
semantics is compositional if the associated semantic equivalence
forms a congruence, that is, subterms of a given process or program
term may be replaced with equivalent subterms without affecting the
overall denotational meaning of the term. For instance, the classical
GSOS format of Bloom et al.~\cite{DBLP:journals/jacm/BloomIM95}
provides a unified formal representation of process languages
interpreted over non-deterministic labelled transition systems, and
guarantees that bisimilarity is compositional. Similarly, syntactic
restrictions of the GSOS format due to
Bloom~\cite{DBLP:journals/tcs/Bloom95} and van
Glabbeek~\cite{DBLP:journals/tcs/Glabbeek11} guarantee
compositionality for coarser equivalences.

More abstractly, GSOS is captured in Turi and Plotkin's bialgebraic
framework of \emph{mathematical operational
  semantics}~\cite{DBLP:conf/lics/TuriP97}, in which sets of
operational semantic rules are represented as distributive laws of a
monad over a comonad, a principle that has come to be used in widely
varying semantic settings~\cite{56f40c248cb44359beb3c28c3263838e,
  DBLP:conf/fossacs/KlinS08, DBLP:conf/lics/FioreS06,
  DBLP:journals/tcs/MiculanP16}.  In particular, Turi and Plotkin
demonstrated that GSOS rules correspond precisely to natural
transformations of type
\[
  \rho_X \colon \Sigma (X \product (\mypowfin X)^{L}) \to
  (\mypowfin\Sigmas X)^{L},
\]
where $\Sigma$ is a polynomial functor on the category of sets
(representing the signature of the process language at hand),~$L$ is a
set of (transition) labels, $\mypowfin$ is the finite power set
functor, corresponding to finitary non-determinism, and $\Sigmas$
denotes the free (term) monad on $\Sigma$. This is an instance of an
\emph{abstract GSOS law}, a natural transformation of type
$\Sigma (\Id \product T) \nat T\Sigmas$, with $T$, the \emph{behaviour
  functor}, instantiated to the functor~$\mypowfin^{L}$, which is
associated with image-finite $L$-labelled transition
systems. 

There is long-standing interest in SOS style specifications of
stateful programming
languages~\cite{DBLP:journals/jlp/Plotkin04a}. The natural
instantiation of mathematical operational semantics to this setting
would use $TX = (\stimes{(X \coproduct \term)})^{S}$ as the behaviour
functor (for a given set~$S$ of \emph{states}). This gives rise to an
extremely expressive rule format: In abstract GSOS laws of type
$\Sigma (\Id \product T) \nat T \Sigmas$, program constructs receive
their arguments as full-blown state transformers, which in particular
they can execute or probe on any number of input states. The
semantic domain provided by mathematical operational semantics in this
case is the final coalgebra
for~$T$, which consists of possibly infinite $S$-branching,
$S$-labelled trees, and thus is an instance of \emph{(coalgebraic)
  resumption semantics}~\cite{PirogGibbons15}, originally developed
for concurrent
settings~\cite{DBLP:conf/mfcs/HennessyP79,DBLP:conf/lics/Brooks93}. The
induced notion of semantic equivalence, for which the format
guarantees compositionality, is very fine-grained: Being a resumption
semantics, it assumes that programs cede complete control to the
environment between any two consecutive steps, and thus makes rather
few programs equivalent. Capturing less sceptical semantics, such as
standard sequential end-to-end net execution, in a compositional
manner has proved rather more challenging; generally speaking,
compositionality is harder for coarser equivalences because less
information is available about the behaviour of
subterms~\cite{DBLP:journals/tcs/Glabbeek11}.

In the present work, we approach this problem by restricting the rule
format to various degrees. We first note that the operational rules
typically associated to imperative languages resemble GSOS rules with
an additional input parameter, the present state. We correspondingly
introduce the \emph{\isos{}} format for the specification of stateful
languages, and show that \isos{} specifications are in an one-to-one
correspondence with natural transformations of type
\begin{displaymath}
  \delta_X\colon \stimes{\Sigma (X \product \stimes{(X \coproduct \term)})} \to
  \stimes{(\Sigmas X \coproduct \term)}.
\end{displaymath}
In a small-step operational semantics given in terms of transitions on
pairs consisting of states in~$S$ and program terms (or a termination
marker~$\checkmark\in\term$),~$\delta_X$ assigns to a given state
(in~$S$) and a program construct applied to argument variables with
given next-step operational behaviour (i.e.~an element of
$\Sigma (X \product \stimes{(X \coproduct \term)})$) its small-step
operational behaviour.
Effectively, this means that, in small-step operational semantics,
program constructs can execute and probe their arguments only on the
current state. We give a resumption semantics (over the final
coalgebra for $T$ as above) for \isos{}, and show that this semantics
agrees with the one obtained by converting~$\delta$ into a GSOS law,
in particular is compositional.

We go on to define two successive coarsenings of resumption
semantics: \emph{Trace semantics} assumes that the environment can
observe but not manipulate states reached in between successive
computation steps, and correspondingly uses the semantic domain
$(S^\mplus+S^\omega)^{S}$, the set of functions expecting an initial
state and returning a possibly terminating $S$-stream. The, yet
coarser, \emph{termination semantics} additionally abstracts from the
intermediate states of a computation, and thus is defined over the
semantic domain $(S \coproduct \term)^{S}$, the set of functions
expecting an initial state and returning either a final state or
divergence. Trace semantics has been used, e.g., in the type-theoretic
semantics of program logics~\cite{DBLP:conf/tphol/NakataU09} and in
formalizing concurrent systems that feature memory isolation
mechanisms~\cite{DBLP:journals/cl/PatrignaniC15,
  DBLP:conf/csfw/PatrignaniDP16}. Termination semantics is the
semantic domain typically associated with
\emph{big-step}~\cite{DBLP:journals/corr/abs-0808-0586,
  DBLP:conf/esop/OwensMKT16} or \emph{natural}
semantics~\cite{DBLP:conf/stacs/Kahn87}, and is a popular choice in
settings where fine architectural details are less
relevant~\cite{DBLP:journals/ita/Rutten99,
  Pitts:1999:ORF:309656.309671,
  DBLP:conf/ac/Pitts00}. \autoref{fig:sep} presents the three domains
in decreasing order of granularity and illustrates their differences
in terms of the programs they distinguish. Here, $S$ is the set of
variable stores assigning to every program variable its current
value. First, consider the programs
$\mathtt{x}\mathbin{\coloneqq} 1 \texttt{;}\ \mathtt{x}
\mathbin{\coloneqq} 2$ and $\mathtt{x} \mathbin{\coloneqq} 2$. These
are clearly equivalent in termination semantics but not in trace
semantics, as the additional initial step of the first program is visible in
trace semantics. Similarly, the programs
$\mathtt{x}\mathbin{\coloneqq} 1 \texttt{;}\ \mathtt{y}
\mathbin{\coloneqq} \mathtt{x}$ and
$\mathtt{x}\mathbin{\coloneqq} 1 \texttt{;}\ \mathtt{y}
\mathbin{\coloneqq} 1$ are clearly equivalent under trace semantics
but not under resumption semantics, as the latter assumes that the
value of~$\mathtt{x}$ may be changed by the environment between the two steps.
\begin{table}[t]
\begin{center}
\begin{tabular}{r|c|c}
  & {{($\mathtt{x}\mathbin{\coloneqq} 1 \texttt{;}\
    \mathtt{y} \mathbin{\coloneqq} \mathtt{x}$) = ($\mathtt{x}\mathbin{\coloneqq} 1 \texttt{;}\
    \mathtt{y} \mathbin{\coloneqq} 1$)}} & {($\mathtt{x}\mathbin{\coloneqq} 1 \texttt{;}\
                                    \mathtt{x} \mathbin{\coloneqq} 2$) = ($\mathtt{x} \mathbin{\coloneqq} 2$)}
   \\\hline
  $\nu\gamma.\, (\stimes{(\gamma \coproduct \term)})^{S}$ & \ding{56} & \ding{56} \\
  $(S^\mplus \coproduct S^\omega)^{S}$ & \ding{52} & \ding{56} \\
  $(S\coproduct \term)^S$ & \ding{52} & \ding{52}
\end{tabular}
\end{center}
\caption{Separating denotational domains by program equivalences.}
\label{fig:sep}
\end{table}
In fact, we show as our first main result that despite the restricted
expressiveness, it is undecidable whether the coarser program
equivalences are compositional for a given \isos{} specification. In a
subsequent step, we thus introduce two sets of syntactic restrictions
in the spirit of Bloom~\cite{DBLP:journals/tcs/Bloom95} and van
Glabbeek~\cite{DBLP:journals/tcs/Glabbeek11}, and show that these
guarantee that \isos{} specifications have compositional trace
semantics or termination semantics, respectively.

\subparagraph*{Related Work}

The above-mentioned \emph{cool} GSOS rules of
Bloom~\cite{DBLP:journals/tcs/Bloom95} and van
Glabbeek~\cite{DBLP:journals/tcs/Glabbeek11} guarantee
compositionality w.r.t.~various flavours of weak bisimilarity; they
motivate the \emph{cool \isos{}} format we introduce here. In a similar
vein, Tsampas et al.~\cite{DBLP:conf/mfcs/0001WNDP21} present abstract
compositionality criteria for weak bisimilarity in the context of
mathematical operational semantics~\cite{DBLP:conf/ctcs/Turi97}.  Weak
bisimilarity is still rather finer than the main semantics of interest
for the present work (trace semantics and termination semantics), as
it only abstracts away from steps that do not modify the state, such
as \texttt{skip}.

Abou-Saleh and
Pattinson~\cite{DBLP:journals/entcs/Abou-SalehP11,DBLP:conf/fossacs/Abou-SalehP13}
consider abstract GSOS specifications for while-languages and
construct semantics in Kleisli categories, working at a somewhat
higher level of generality than we do here, in particular
parametrizing over notions of side-effect. Roughly speaking, the
coarsest of their semantics amounts to a
\emph{steps-until-termination} semantics that counts but does not
enumerate intermediate states, and thus is coarser than trace
semantics but finer than termination semantics. They propose an abstract
\emph{condition on
  cones}~\cite[Sec.~4.4]{DBLP:journals/entcs/Abou-SalehP11} that
guarantees compositionality for steps-until-termination
semantics. This condition is hard to verify in concrete instances but
ensured by \emph{evaluation-in-context} rule
formats~\cite{DBLP:conf/fossacs/Abou-SalehP13} that correspond roughly
to our \emph{cool} \isos{} format, for which we show compositionality
even w.r.t.~termination semantics (a goal explicitly mentioned by
Abou-Saleh and
Pattinson~\cite[Section~6]{DBLP:conf/fossacs/Abou-SalehP13}). Our
\emph{streamlined} \isos{} format, which guarantees compositionality
of trace semantics, appears to be more permissive than
evaluation-in-context.

Bloom and Vandraager~\cite{bv94} and Mousavi et al.~\cite{mousavi04}
propose further SOS-style formats for computations with data and
prove compositionality results for semantic equivalences resembling
our resumption semantics. We note that these results require fairly
tedious proofs; this again highlights the advantage of the categorical
approach where they come entirely for free (see
\autoref{prop:finecong}). The \emph{Sfisl} format~\cite{mousavi04}
is shown to make trace semantics compositional, but in contrast to our
streamlined format it is not expressive enough to cover a fully
fledged while-language. Termination semantics is not considered
in either of these works.

\section{Preliminaries}\label{sec:prelim}
We assume that readers are familiar with basic notions from category theory
such as functors, natural transformations, and monads. In the
following we briefly recall some terminology concerning algebras and
coalgebras. Throughout, $\Set$ denotes the category of sets and
functions. We write $1 = \{*\}$ for the terminal object. For a pair
$X_1, X_2$ of objects we write $X_1\times X_2$ for the product with the
projections $\fst\colon X_1 \times X_2 \to X_1$ and $\snd\colon X_1\times X_2\to X_2$. For a pair of
morphisms $f_i\colon Y \to X_i$, $i = 1,2$, we let
$\brks{f_1,f_2}\colon Y \to X_1 \times X_2$ denote the unique induced
morphism. The \emph{canonical strength} of an endofunctor
$F \c \Set \to \Set$  is the natural transformation with components $\strength_{X,Y}\c \xtimes{FY}\to F(\xtimes{Y})$ defined by $\strength_{X,Y}(\cpair{x}{p}) = F(\lambda
y.\,(\cpair{x}{y}))(p)$. We usually drop the subscripts $X$ and $Y$.
%
%
%

\subparagraph*{Algebras} Given an endofunctor $F$ on a category $\C$,
an \emph{$F$-algebra} is a pair $(A,\alpha)$ of an object~$A$ (the
\emph{carrier} of the algebra) and a morphism $\alpha\colon FA\to A$
(its \emph{structure}). A \emph{homomorphism} from an $F$-algebra
$(A,\alpha)$ to an $F$-algebra $(B,\beta)$ is a morphism
$h\colon A\to B$ of~$\C$ such that $h\comp \alpha = \beta\comp
Fh$. Algebras for $F$ and their homomorphims form a category $\Alg F$, and an
\emph{initial} $F$-algebra is simply an initial object in that
category. If it exists, we denote the
initial $F$-algebra by $\mu F$ and its structure by $\ini\colon F(\mu F) \to \mu F$.

A common example of functor algebras are algebras over a signature.
An \emph{algebraic signature} consists of a set $\Sigma$ of operation
symbols together with a map $\ar\colon \Sigma\to \Nat$ associating to
every operation symbol $\f$ its \emph{arity} $\ar(\f)$. Symbols of arity $0$ are called \emph{constants}. Every signature $\Sigma$
induces the polynomial functor
$\coprod_{\f\in\Sigma} (\argument)^{\ar(\f)}$ on $\Set$, which we
denote by the same letter $\Sigma$. An algebra for the functor
$\Sigma$ then is precisely an algebra for the signature
$\Sigma$, i.e.~a set $A$ equipped with an operation
$\f_A\colon A^n \to A$ for every $n$-ary operation symbol
$\f\in \Sigma$. Homomorphisms between $\Sigma$-algebras are maps respecting the algebraic
structure.

Given a set $X$ of variables, we write $\Sigmas X$ for the
$\Sigma$-algebra of terms generated by $\Sigma$ with variables from
$X$. It is the \emph{free $\Sigma$-algebra} on $X$, that is, every map
$f\colon X\to A$ into the carrier of a $\Sigma$-algebra $(A,\alpha)$
uniquely extends to a homomorphism $\bar{f}\colon \Sigmas X\to A$. In
particular, the free algebra on the empty set is the initial algebra
$\mu \Sigma$; it is formed by all \emph{closed terms} of the
signature. As shown by Barr~\cite{Barr70}, the formation of free
algebras extends to a monad $\Sigmas\colon \Set\to \Set$, the \emph{free
monad} on $\Sigma$.
For every $\Sigma$-algebra $(A,\alpha)$ we obtain an
Eilenberg-Moore algebra $\wh\alpha\colon \Sigmas A \to A$ as the free
extension of $\id_A$. This is the map evaluating terms over $A$ in the algebra.

\subparagraph*{Coalgebras} A \emph{coalgebra} for an
endofunctor $F$ on $\C$ is a pair $(C,\gamma)$ of an object $C$ (the
\emph{carrier}) and a morphism $\gamma\colon C\to FC$ (its
\emph{structure}). A \emph{homomorphism} from an $F$-coalgebra
$(C,\gamma)$ to an $F$-coalgebra $(D,\delta)$ is a morphism
$h\colon C\to D$ such that $Fh\comp \gamma = \delta\comp h$.
Coalgebras for $F$ and their homomorphisms form a category $\Coalg F$, and a
\emph{final} coalgebra is a final object in that category. If
it exists, we denote the final $F$-coalgebra by $\nu F$ and its
structure by $\ter\colon \nu F \to F(\nu F)$, and we write
$\gamma^{\sharp}\colon (C,\gamma)\to (\nu F, \ter)$ for the unique
homomorphism.

\begin{example}\label{ex:final-coalgebras}
\begin{enumerate}
  \item Fix a set $S$. The set functor $BX=\stimes{(X+1)}$ has a final
  coalgebra carried by $\nu B = S^{\mplus}+ S^\omega$, the set of all non-empty
  possibly terminating $S$-streams. Its coalgebra structure
  $S^{\mplus}+ S^\omega \to \stimes{(S^{\mplus}+ S^\omega +1)}$ sends
  a stream $sw$ (where $s\in S$ and $w\in S^{*}+S^\omega$) to
  $(\cpair{s}{w})$ if $w\in S^{\mplus}+S^{\omega}$ and to $(\cpair{s}{*})$ if $w$ is
  empty.

\item Similarly, for the set functor $TX=(BX)^S = (\stimes{(X+1)})^S$,
  the terminal coalgebra is carried by the set of possibly infinite
  $S$-ary trees (i.e.~every node is either a leaf or has an
  $S$-indexed set of children) that have more than one node and where
  every edge is labelled by an element of $S$.
  The coalgebra structure $\nu T\to (\stimes{(\nu T+1)})^S$
  sends a tree $t$ to the map $s\mapsto (\cpair{s'}{t'})$ where $s'$ is the
  label of the edge from the root to its $s$-th child, and $t'$ is the subtree
  rooted at that child if it has more than one node, or $*$
  otherwise.
\end{enumerate}
\end{example}

\section{Stateful SOS Specifications}
\label{sec:new}

\label{sec:revisit}
We start off with an observation on the standard operational semantics for
sequential composition in imperative languages (see e.g.~Plotkin~\cite{DBLP:journals/jlp/Plotkin04a}), given by the following
rules:
\begin{equation}
  \label{eq:seq}
  \begin{array}{l@{\qquad\qquad}l}
  \inference[\texttt{seq1}]{\rets{s, p}{s\pr}}{\goes{s, (p\texttt{;}\ q)}{s\pr , q}} &
  \inference[\texttt{seq2}]{\goes{s, p}{s\pr, p\pr}}{\goes{s, (p\texttt{;}\ q)}{s\pr , (p\pr                                  \texttt{;}\ q)}}
  \end{array}
\end{equation}
Rule \texttt{seq1} asserts that if a program $p$, on input (state)~$s$,
terminates and produces a new state~$s\pr$, then the program
$p \texttt{;}\ q$, on input state $s$, evolves to program~$q$ and
produces the new state~$s\pr$. The other case is captured by rule
\texttt{seq2}, which asserts that if $p$, on input $s$, transitions to
$p\pr$ and produces~$s\pr$, then $p \texttt{;}\ q$, on input $s$,
transitions to $p\pr \texttt{;}\ q$ and produces~$s\pr$. Note
that for both rules, the \emph{input}~$s$ is the same in the premiss
and in the conclusion.
Consequently, to decide how $p \texttt{;}\ q$ transitions from~$s$ in
the next step, we need to know only how~$p$ behaves on~$s$, which we
can regard as the input of the entire rule.  This allows us to give a
concise categorical formulation of the rules \texttt{seq1} and
\texttt{seq2} in terms of a natural transformation
$\stimes{(X \product \stimes{(X \coproduct \term)})^{2}} \to
(\stimes{\Sigma^{\star}X \coproduct \term)}$ where $\Sigma$ is a signature containing the binary operation symbol~`\texttt{;}'. The transformation is defined by
\[
  (\cpair{s}{(x,\cpair{s'}{*}),(y,\_,\_\,)}) \mapsto (\cpair{s'}{y})
\qquad\text{and}\qquad
 (\cpair{s}{(x,\cpair{s'}{x'}),(y,\_,\_\,)}) \mapsto (\cpair{s'}{(x\pr \texttt{;}\ y)}).
\]
Compare the above with the interpretation obtained by instantiating
the GSOS principle~\cite{DBLP:conf/lics/TuriP97} to stateful
computations in the standard manner~\cite{DBLP:conf/ctcs/Turi97}. The
interpretation of~`\texttt{;}' is then given as a natural
transformation ${(X \product (\stimes{(X \coproduct \term)})^{S})^{2}}
\to (\stimes{(\Sigma^{\star}X \coproduct \term)})^S$ whose uncurried form
$\stimes{(X \product (\stimes{(X \coproduct \term)})^{S})^{2}}
\to \stimes{(\Sigma^{\star}X \coproduct \term)}$ is defined by
\begin{align*}
  & (\cpair{s}{(x,f),(y,\_\,)}) \mapsto \begin{cases}
    (\cpair{s'}{y}) & \mathrm{if}~f(s) = (\cpair{s'}{*}),\\
    (\cpair{s'}{(x\pr \texttt{;}\ y)}) & \mathrm{if}~f(s) = (\cpair{s\pr}{x\pr}).
  \end{cases}
\end{align*}
In this setting, the semantics of~$p \texttt{;}\ q$ receives the
next-step behaviours of~$p$,~$q$ as state transformers, and can in
principle probe these state transformers on arbitrary states (of
course, for~`\texttt{;}', this does not actually happen). By contrast,
our rule format, the \isos{} format formally introduced next, embodies
the restriction that the behaviour of a complex term on an input state
$s$ is predicated only on the behaviour of its subterms on~$s$. It is
this trade-off in expressiveness that buys our compositionality
results for \isos{} specifications.



\subparagraph*{The \Isos{} Rule Format}

We proceed to underpin the intuition given above with formal
definitions. We fix a countably infinite set
$\V = \{x_{1},x_2,\dots\}\cup \{y_{1},y_2,\dots\}$ of
\mbox{(\emph{meta-})}\emph{variables} and a countable set $S$ of
\emph{states}; in typical applications the elements of~$S$ are
variable stores. Moreover, we fix an algebraic signature $\Sigma$,
equivalently a polynomial functor also denoted~$\Sigma$
(cf.~\autoref{sec:prelim}). We think of the operations in~$\Sigma$ as
program constructs, and correspondingly, \emph{programs} are closed
$\Sigma$-terms, i.e.~terms formed using only the operations
in~$\Sigma$, with constants in~$\Sigma$ forming the base case.

\begin{defn}[Literals]
  \label{def:literals}
  A \emph{progressing $\Sigma$-literal} is an expression
  $\goes{s,p}{s',q}$ with $p, q \in \Sigmas\V$ and
  $s,s' \in S$. We say that $s$ is the \emph{input}, $p$ is the
  \emph{source}, $s'$ is the \emph{output} and $q$ is the \emph{target}
  of the literal. A \emph{terminating $\Sigma$-literal} is an
  expression $\rets{s,p}{s'}$ with $s,s' \in S$ and
  $p \in \Sigmas\V$. In this case, $s$ is the input,
  $p$ is the source and $s'$ is the output of the literal. A
  \emph{$\Sigma$-literal} (without further qualification) is
  either a progressing or a terminating $\Sigma$-literal.
\end{defn}
Our rule format shares some similarities with \emph{stream GSOS}~\cite[Def.~37]{DBLP:journals/tcs/Klin11}.

\begin{defn}[Rules]
  \label{def:rule}
  A \emph{\isos{} rule} for an $n$-ary operator $\f\in\Sigma$ is an
  expression
  \begin{equation}
    \inference{l_{1}  & \dots & l_{n}}{L}
  \end{equation}
  (or, in inline notation, $l_1\;\dots\;l_n/L$) where
  $l_1, \ldots, l_n$ (the \emph{premisses} of the rule) and $L$ (the
  \emph{conclusion} of the rule) are $\Sigma$-literals that have the same
  input $s \in S$, the \emph{input} of the rule, and satisfy the
  following conditions:
  \begin{enumerate}
  \item The source of the premiss $l_j$ is the variable $x_j$, and the target is $y_j$ if $l_j$ is progressing.
%
  \item The source of the conclusion $L$ is the term
    $\f(x_{1},\dots,x_{n})$. Moreover, if $L$ is progressing, the variables of its target
    term appear either as the source or the target of some premiss.
%

  \end{enumerate}
  The rule is \emph{progressing} if~$L$ is progressing, and otherwise
  the rule is \emph{terminating}. The \emph{trigger} of the rule is
  the tuple formed by its input $s$ together with the sequence of
  pairs
  $\overrightarrow{(s',\mathtt{c})} = (s_{1}',
  \mathtt{c}_{1}),\dots,(s_{n}',\mathtt{c}_{n})$, where $s_{j}'$ is
  the output of~$l_{j}$ and
  $\mathtt{c}_{j} \in \{\mathtt{pr},\mathtt{te}\}$ indicates
  whether~$l_{j}$ is progressing ($c_j=\mathtt{pr}$) or terminating
  ($c_j=\mathtt{te}$).
\end{defn}

%
%
\begin{defn}
  \label{def:spec}
  A \emph{\isos{} specification} is a set of \isos{} rules such that
  for each $n$-ary operator $\f$, each $s\in S$ and each sequence
  $\overrightarrow{(s',\mathtt{c})} = (s_{1}',
  \mathtt{c}_{1}),\dots,(s_{n}',\mathtt{c}_{n})$ where $s_j'\in S$ and
  $\mathtt{c}_{j} \in \{\mathtt{pr},\mathtt{te}\}$, there is
  {exactly} one rule for $\f$ with
  trigger $(s,\overrightarrow{(s',\mathtt{c})})$.
\end{defn}
\begin{notn}
  \label{rem:omit}
  By writing
  \begin{gather*}
    \inference{l_{1} & \dots & l_{j-1} & l_{j+1} & \dots & l_{n}}{L}
  \end{gather*}
  we mean the set of all stateful SOS rules of the form
  $l_{1}\; \dots\; l_{n}/L$ (with the missing premiss~$l_j$ filled in in any way
  possible). This captures the situation where the behaviour of the source $\f(x_1,\dots,x_n)$ of~$L$ does not depend on the behaviour of~$x_j$, given
  $l_1,\dots,l_{j-1}, l_{j+1}, \dots,l_{n}$.
\end{notn}


\begin{rem}\label{sec:sugar}
  The use of fixed enumerated variables $x_1,x_2,\ldots$ and
  $y_1,y_2,\ldots$ simplifies abstract reasoning about \isos{} (e.g.\
  \autoref{thm:specifications-vs-transformations} below). In
  examples, we use arbitrary variable names such as $p,q,x,y$, and we
  typically write rules using rule schemes, using hopefully
  self-explanatory notation. For instance, rule \texttt{seq1} in
  \autoref{fig:while1} (discussed in detail in \autoref{ex:while}) is
  to be understood as the set
  $\{\;\rets{s, p}{s\pr}\;/\;\goes{s, (p \texttt{;}\ q)}{s\pr , q}\mid
  s,s'\in S\}$ of stateful SOS rules, with variables~$p,q$, and rule
  \texttt{while1} as the set
  $\{\;/\rets{s, \texttt{while}~ e~p}{s}\mid s\in S,\,[e]_{s} = 0\}$
  (with premiss omitted as per \autoref{rem:omit}). Note the side condition
  $[e]_{s} = 0$ (expression~$e$ evaluates to~$0$ in state~$s$) of
  \texttt{while1}; the rule schemes and their side
  conditions need to be set up in such a way that they actually obey
  the restrictions in \autoref{def:spec}. For example, in the case of
  \texttt{while1} and \texttt{while2}, this is ensured by the
  respective side conditions ($[e]_{s} = 0$ and $[e]_{s} \neq 0$)
  being exhaustive and mutually exclusive.
\end{rem}

\begin{example}
  \label{ex:while}
  We will use a prototypical imperative language, \while{}, as a
  running example. Fix a countably infinite set $\mathcal{A}$ of
  program variables; then, the set~$S$ of \emph{stores} consists of
  all maps $s\colon\mathcal{A}\to\Nat$ whose \emph{support}
  $\{x\in \mathcal{A}\mid s(x)\neq 0\}$ is finite. We denote by
  $s_{[x \leftarrow v]}$ the result of changing the value of
  variable~$x$ to~$v$ in a store $s$. Moreover, we assume
  a set $E$ of expressions that include the arithmetic
  operations $+,-,*$, constants $n\in \Nat$ and variables $x \in \mathcal A$.  We write
  $[e]_{s}$ for the evaluation of expression $e$ under store $s$ (in the
  literature, evaluation is often defined stepwise by induction on the structure
  of the expression~\cite{DBLP:journals/jlp/Plotkin04a}; since this process does
  not affect the program state, we instead assume a denotational semantics for
  simplicity). The syntax of \while{} is given by
  the grammar
  \begin{bnf*}
    \bnfprod{prog}
    {\texttt{skip} \bnfor x \mathbin{\coloneqq} e \bnfor
      \bnfpn{prog} \texttt{;}\ \bnfpn{prog} \bnfor \texttt{while}~e~\bnfpn{prog}}
    \qquad(x \in \mathcal{A},\, e \in E),
  \end{bnf*}
  which in terms of algebraic operations means that the
  signature~$\Sigma$ includes constants \texttt{skip} and
  $x \mathbin{\coloneqq} e$ for all $x\in\mathcal{A}$, $e\in E$, a binary
  operation $\texttt{;}$ and a unary operation \texttt{while}~$e$ for
  each~$e\in E$. The corresponding polynomial functor is
  \[
    \Sigma X = \term \coproduct \mathcal{A} \product E
    \coproduct X \times X
    \coproduct E \product X.
  \]
  \begin{figure}
    \[
      \begin{array}{l@{\qquad}l}
        \inference[\texttt{skip}]{}{\rets{s, \texttt{skip}}{s}}
        &
          \inference[\texttt{asn}]{}{\rets{s, (x\mathbin{\coloneqq}
          e)}{s_{[x \leftarrow
          [e]_{s}]}}} \\[4ex]
        \inference[\texttt{while1}]{}{\rets{s, \texttt{while}~
        e~p}{s}}{[e]_{s} = 0}
        &
          \inference[\texttt{while2}]{}{\goes{s, \texttt{while}~e~p}{s ,(p \texttt{;}\
          \texttt{while}~e~p)}}{[e]_{s} \neq 0} \\[4ex]
        \inference[\texttt{seq1}]{\rets{s, p}{s\pr}}{\goes{s, (p \texttt{;}\
        q)}{s\pr , q}}
        &
          \inference[\texttt{seq2}]{\goes{s, p}{s\pr, p\pr}}{\goes{s, (p
          \texttt{;}\ q)}{s\pr , (p\pr\texttt{;}\ q)}}
      \end{array}
    \]
    \caption{Operational semantics of \while{}.}
    \label{fig:while1}
  \end{figure}%
  The operational semantics of \while{} in the form of a \isos{}
  specification is shown in \autoref{fig:while1}, using rule schemes as
  per \autoref{sec:sugar}.
\end{example}
\noindent
As indicated by the discussion at the beginning of this section, \isos{} specifications can be represented as natural transformations:
\begin{defn}\label{def:stateful-sos-law}
  A \emph{\isos{}
    law} is a natural transformation
  \[
    \delta_X\c\stimes{\Sigma (X \product \stimes{(X \coproduct \term)})}
    \to \stimes{(\Sigmas X\coproduct \term)}\qquad (X\in \Set).
  \]
\end{defn}

\begin{rem}\label{rem:spec-vs-trafo}

\begin{enumerate}
\item\label{rem:spec-to-trafo} Every \isos{} specification $\L$ yields a \isos{} law
\[
    \delta_X= [\delta^{\f}_X]_{\f\in \Sigma}\c\stimes{\Sigma (X \product \stimes{(X \coproduct \term)})}
    \to \stimes{(\Sigmas X \coproduct \term)} \qquad (X\in \Set)
  \]
  by distributing $\stimes{(-)}$ over
  $\Sigma (X \product \stimes{(X \coproduct \term)})$ and copairing
  the maps
\begin{equation}\label{eq:deltaf}
  \delta_{X}^{\f} \c \stimes{(X \times \stimes{(X \coproduct \term)})^{\ar(\f)}} \to \stimes{(\Sigmas X \coproduct \term)}\qquad (\f\in \Sigma)
\end{equation}
defined as follows. Given $(\cpair{s}{((v_1,\cpair{s'_1}{w_1}),\dots,(v_n,\cpair{s'_n}{w_n}))})\in \stimes{(X \times \stimes{(X \coproduct \term)})^{n}}$ with $n=\ar(f)$, let $l_1\;\dots\;l_n/L$ be the unique rule in $\L$ with source $\f$ and trigger $(s,((s'_1,c_1),\dots,(s'_n,c_n))$ where
$c_j=\mathtt{pr}$ if $w_j\in X$ and $c_j=\mathtt{te}$ if $w_j=*$. Let $s'$ be the output of $L$. Then
$\delta_X^{\f}(\cpair{s}{((v_1,\cpair{s'_1}{w_1}),\dots,(v_n,\cpair{s'_n}{w_n}))})$ is $(s',*)$ if the rule is terminating, and otherwise $(s',t')$ where $t'\in \Sigmas X$ is the term obtained from the target $t\in \Sigmas \V$ of $L$ by substituting $x_j$ by $v_j$ and $y_j$ by $w_j$ (the latter whenever $c_j=\mathtt{pr}$).
\item\label{rem:trafo-to-spec} Conversely, every \isos{} law $\delta$ yields a \isos{} specification $\L$ whose rules are defined as follows. For every $n$-ary operation symbol~$\f\in \Sigma$, $s,s_1',\ldots, s_n'\in S$ and $W \seq \{1,\ldots,n\}$, let $(\cpair{s'}{t})$ be the value of $\delta_\V^{\f}$  on $(\cpair{s}{((x_1,\cpair{s'_1}{w_1}),\dots,(x_n,\cpair{s'_n}{w_n}))})$  where $w_j=y_j$ if $j\in W$ and $w_j=*$ otherwise. If $t\in \Sigmas \V$, then $\L$ contains the rule
    \begin{displaymath}
      \inference{(s,x_j\to s_j',y_j)_{j\in W}\qquad(\rets{s,x_j}{s_j'})_{j\in\{1,\ldots,n\}\smin W}}{s,\f(x_1,\ldots,x_{n})\to s',t}, 
     \quad
    \end{displaymath}
and if $t=*$, then $\L$ contains the rule
    \begin{displaymath}
      \inference{(s,x_j\to s_j',y_j)_{j\in W}\qquad(\rets{s,x_j}{s_j'})_{j\in\{1,\ldots,n\}\smin W}}{\rets{s,\f(x_1,\ldots,x_{n})}{s'}}. 
    \end{displaymath}
\end{enumerate}
\end{rem}

\begin{theorem}\label{thm:specifications-vs-transformations}
  There is a bijective correspondence between (1) \isos{}
  specifications, (2) \isos{} laws, and (3) families of
  maps of the form
  \[
    \big(r_{\f,W}\c \stimes{S^{\ar(\f)}}\to \stimes{\Sigmas(\ar(\f)+W)}+
    S\big)_{\f\in\Sigma,W\subseteq \ar(\f)}.
  \]
  Here we identify the natural number $\ar(\f)$ with the set
  $\{1,\ldots,\ar(\f)\}$.
\end{theorem}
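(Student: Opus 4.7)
My strategy is to establish the two bijections (1)$\leftrightarrow$(2) and (2)$\leftrightarrow$(3) independently and compose them. The constructions for (1)$\leftrightarrow$(2) are already spelled out in \autoref{rem:spec-vs-trafo}: write $\delta^{\L}$ for the law obtained from a specification $\L$, and $\L^{\delta}$ for the specification obtained from a law $\delta$. To verify mutual inversion, one direction is a straightforward unfolding: the ``exactly one rule per trigger'' clause of \autoref{def:spec} ensures that the rule of $\L$ with a given trigger is recorded faithfully by $\delta^{\L}_{\V}$ on the corresponding input, so that the construction of $\L^{\delta^{\L}}$ recovers that very rule. For the other direction I would argue by naturality: every element of $\stimes{\Sigma(X\product\stimes{(X\coproduct\term)})}$ is the image, under a suitable map $\V\to X$, of a ``canonical'' element at $\V$ whose slots carry the enumerated metavariables $x_j,y_j$; hence $\delta_X$ is fully determined by the values of $\delta_{\V}$ on canonical inputs, and those values are exactly what is recorded by the rules of $\L^{\delta}$.

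For (2)$\leftrightarrow$(3) I plan to decompose the source as a polynomial functor and apply the Yoneda lemma. Writing $n=\ar(\f)$, and expanding $(X+1)^n\iso\coprod_{W\seq n} X^{W}$ (so $W$ records at which positions the factor $X+1$ is resolved as $X$ rather than as $\term$), we obtain
\[
  \stimes{\Sigma (X \product \stimes{(X \coproduct \term)})}\iso
  \coprod_{\f\in\Sigma}\coprod_{W\seq n}
  S^{1+n}\times X^{n+|W|}.
\]
Each summand, viewed as a functor of $X$, is an $S^{1+n}$-indexed coproduct of representables $\Set(n+W,\argument)$. The codomain rewrites as $\stimes{(\Sigmas X\coproduct\term)}\iso \stimes{\Sigmas X}+S$. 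By Yoneda together with the universal properties of coproducts, natural transformations of the asserted type thus correspond bijectively to families of maps
\[
  r_{\f,W}\c S^{1+n}\to \stimes{\Sigmas(n+W)}+S,
\]
which matches item~(3) after identifying $S^{1+n}=\stimes{S^{n}}$.

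Chasing both bijections shows that the data correspond concretely: the rule of $\L$ with source $\f(x_1,\dots,x_n)$ and trigger $(s,(s'_j,c_j)_{j=1}^n)$ matches the value $r_{\f,W}(s,s'_1,\dots,s'_n)$ for $W=\{j\mid c_j=\mathtt{pr}\}$, where a value in $S$ encodes a terminating rule while $(s',t)\in \stimes{\Sigmas(n+W)}$ encodes a progressing rule whose target $t$ uses precisely the admissible variables $\{x_1,\dots,x_n\}\cup\{y_j\mid j\in W\}$. The main obstacle is the careful bookkeeping of positions, source versus target variables, and renamings across the three representations; once this scaffolding is pinned down, each individual verification is an essentially mechanical computation with no conceptual content.
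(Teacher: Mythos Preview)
Your proposal is correct and matches the paper's approach: the (2)$\leftrightarrow$(3) bijection via polynomial expansion of the source functor followed by Yoneda is exactly the paper's chain of isomorphisms, and for (1)$\leftrightarrow$(2) the paper takes the cosmetically different route of first writing down the evident bijection (1)$\leftrightarrow$(3) and then observing that the constructions of \autoref{rem:spec-vs-trafo} are precisely the composites (1)$\to$(3)$\to$(2) and (2)$\to$(3)$\to$(1). Your direct naturality argument for (1)$\leftrightarrow$(2) is the Yoneda step unwound at the concrete level, so the content is identical.
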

The correspondence between~(1) and~(2) is given by the translations of
\autoref{rem:spec-vs-trafo}, and the correspondence between~(2)
and~(3) is shown using the Yoneda lemma.

\section{Categorical Semantics and Compositionality}
\label{sec:abstract}

We proceed to develop a categorical treatment of \isos{} along the
lines of mathematical operational semantics in the style of Turi and
Plotkin~\cite{DBLP:conf/lics/TuriP97} and
Bartels~\cite{56f40c248cb44359beb3c28c3263838e}. Furthermore, we shall define
two semantic domains of interest, both coarser than the one initially
obtained through Turi-Plotkin semantics, and show that the problem of
whether a given \isos{} specification is compositional is
undecidable. We recall that if the denotational semantics of a
programming language is given by a map
$\sbrack{-}\colon \mu\Sigma \to D$ into a semantic domain $D$, then it
is called \emph{compositional} if the corresponding behavioural equivalence forms a
congruence, that is, for every $n$-ary operator $\f\in \Sigma$ and
programs $p_i,q_i\in \mu\Sigma$ ($i=1,\ldots,n$),
\[
  \text{$\sbrack{p_i}=\sbrack{q_i}$ for $i=1,\ldots,n$}
  \qquad\text{implies}\qquad
  \sbrack{\f(p_1,\ldots,p_n)}=\sbrack{\f(q_1,\ldots,q_n)}.
\]
Compositionality asserts that subprograms of a program~$p$ may be
replaced with equivalent subprograms without affecting the semantics
of~$p$, and thus allows modular reasoning.

\subsection{GSOS Laws}
\label{sec:gsos}

Turi and Plotkin's \emph{mathematical operational
  semantics}~\cite{DBLP:conf/lics/TuriP97} identifies sets of rules in
structural operational semantics (SOS) with distributive laws of
various types on a cartesian base category. We will work more
specifically with distributive laws of free monads over cofree
copointed functors on the base category $\Set$, where the free monad
is associated to a polynomial functor.  Such distributive laws can
equivalently be presented as follows.
\begin{defn}
  \label{def:gsos}
  Given a polynomial functor $\Sigma$ and an endofunctor $T$ on $\Set$, a \emph{GSOS law of}
  $\Sigma$ \emph{over} $T$ is a natural transformation
  $\rho \c \Sigma (\Id \product T) \nat T\Sigmas$.
\end{defn}
%
%
%

\noindent We shall see below that \isos{} laws determine
GSOS laws. The interested reader may find further examples of GSOS
laws in the
literature~\cite{DBLP:conf/ctcs/Turi97,56f40c248cb44359beb3c28c3263838e,DBLP:journals/tcs/Klin11}. Roughly
speaking, the input of~$\rho$ is a (program) operation applied to pairs
each consisting of a meta-variable and its assumed next-step behaviour
(encapsulated in~$T$), and the output is a next-step behaviour
reaching poststates given as programs with meta-variables.

Given a GSOS law $\rho$, the initial $\Sigma$-algebra can be equipped with a unique $T$-coalgebra structure
$\gamma \c {\mu\Sigma \to T(\mu\Sigma)}$ such that the diagram
\begin{equation}\label{eq:gamma}
  \begin{tikzcd}
    \Sigma (\mu\Sigma)
    \arrow[rr, "\ini"]
    \ar{d}[swap]{\Sigma \langle \id, \gamma \rangle}
    & &
    \mu\Sigma
    \arrow[d, dashed, "\gamma"]
    \\
    \Sigma (\mu\Sigma \product T(\mu\Sigma))
    \arrow[r, "\rho_{\mu\Sigma}"]
    & T \Sigmas (\mu\Sigma)
    \arrow[r, "T \hat\ini"]
    &
    T (\mu\Sigma)
  \end{tikzcd}
\end{equation}
commutes (see \autoref{sec:prelim} for the notation). The coalgebra
$(\mu\Sigma,\gamma)$ is called the \emph{operational model} of
$\rho$.
Dually, assuming the existence of a final coalgebra $\nu T$, there is
a unique $\Sigma$-algebra structure
$\alpha \c \Sigma(\nu T) \to \nu T$ such that the following diagram
commutes:
\begin{equation}\label{eq:alpha}
  \begin{tikzcd}[column sep = 30]
    \Sigma (\nu T)
    \ar{r}{\Sigma \langle \id, \ter \rangle}
    \ar[dashed]{d}[swap]{\alpha}
    &
    \Sigma (\nu T \product T(\nu T))
    \arrow[r, "\rho_{\nu T}"]
    &
    T
    \Sigmas (\nu T)
    \arrow[d, "T \hat \alpha"]
    \\
    \nu T
    \arrow[rr, "\ter"]
    & &
    T (\nu T)
  \end{tikzcd}
\end{equation}
The algebra $(\nu T, \alpha)$ is the \emph{denotational
  model} of $\rho$. A fundamental well-behavedness property of GSOS laws is that the
unique $\Sigma$-algebra homomorphism $(\mu \Sigma, \ini) \to (\nu T,
\alpha)$ and the unique $T$-coalgebra homomorphism $(\mu\Sigma,\gamma) \to
(\nu T, \ter)$ coincide. We denote this morphism by
\begin{equation}\label{eq:beh}
  \beh_{\rho} \c \mu\Sigma \to \nu T,
\end{equation}
and we think of it as assigning to programs their denotational
behaviour. Compositionality of this semantics is immediate from the
fact that $\beh_\rho$ is a $\Sigma$-algebra homomorphism.

\subsection{Semantic Domains for \Isos{}}
We proceed to introduce three denotational semantics of \isos{}, in
order of increasing abstraction:  \emph{resumption semantics}, in
which the program essentially cedes control to the environment between
any two program steps; \emph{trace semantics}, where the environment
may observe but not manipulate the state between program steps; and
\emph{termination semantics}, in which only the effect of executing
the program end-to-end is observable.

\begin{notn}
From now on, we instantiate the functor $T$ of \autoref{def:gsos} to 
\[ TX = (\stimes{(X \coproduct \term)})^{S},\]
for a fixed set $S$ of states. Thus $T$ represents state
transformers with possible non-termination. 
\end{notn}

\subparagraph{Resumption semantics} Every \isos{}
law~$\delta$ (see \autoref{def:stateful-sos-law})
canonically induces a GSOS law
\[\hat{\delta} \c \Sigma(\Id \product T) \nat T\Sigmas.\]
 This will guarantee
compositionality for the most fine-grained of our semantics, which we
shall refer to as \emph{resumption semantics}, via established methods
of mathematical operational semantics as recalled above. Details are
as follows. The component~$\hat\delta_X$ is obtained by currying the composite
\begin{equation}\label{eq:embed}
\begin{aligned}
  \stimes{\Sigma(X \product TX)}
  &
  \xra{\hspace*{3.7mm}\sbrks{\strength}\hspace*{3.7mm}}
  \stimes{\Sigma(\stimes{(X \product TX)})}
  \iso
  \stimes{\Sigma(X \product (\stimes{TX}))}
  \\
  &
  \xra{\tpair{\id}{\Sigma(\id \product \ev )}}
  \stimes{\Sigma (X \product \stimes{(X \coproduct\term)})}
  \xra{\delta_X}
  \stimes{(\Sigmas X \coproduct \term)},
  \end{aligned}
\end{equation}
where
$\strength\colon \stimes{\Sigma(X\times TX)} \to \Sigma(\stimes{(X \times TX)})$ is the strength (cf.~\autoref{sec:prelim}) and
$\ev \colon S \times TX = S \times (\stimes{(X+1)})^S \to \stimes{(X+1)}$ denotes the evaluation map.  Recall from
\autoref{ex:final-coalgebras} that the final coalgebra for $T$ is
carried by the set of possibly infinite $S$-branching trees, with edges labelled in $S$. Using \eqref{eq:gamma} we obtain the operational model
$\gamma \c \mu\Sigma \to T (\mu\Sigma)$ associated to $\hat\delta$. In terms of \isos{} specifications, it can be described as follows.

\begin{defn}
  \label{def:transition}
  Given a \isos{} specification $\mathcal{L}$, its \emph{transition
    function} is the map
  \[\gamma_0 \c \stimes{\mu\Sigma} \to \stimes{(\mu\Sigma \coproduct
  \term)}\] inductively defined by
  \begin{align*}
    \gamma_0(\cpair{s}{\f(t_{1},\dots,t_{n})}) & = m(\delta^{\f}_{\mu\Sigma}(\cpair{s}{(d_{1},\dots,d_{n})}))
  \end{align*}
  where 
\[d_{j} = (\cpair{t_j}{\gamma_0(\cpair{s}{t_{j}})})\qquad\text{and}\qquad
  m = \big( \stimes{(\Sigmas(\mu\Sigma) \coproduct \term)}
  \xra{\tpair{\id}{(\hat\ini \coproduct \id)}} \stimes{(\mu\Sigma \coproduct
  \term)}\big),
\]
  using the term evaluation map
  $\hat\ini\colon \Sigmas(\mu \Sigma) \to \mu\Sigma$, and $\delta_{\mu\Sigma}^{\f}$ as in \eqref{eq:deltaf}. Thus, $\gamma_0(s,p)$ performs the first computation step of program $p$ on input $s$ according to the specification $\L$.
 We
  write
  \[
    \goes{s,p}{s\pr,p\pr}
    \qquad\text{and}\qquad
    \rets{s,p}{s\pr}
  \]
  if $\gamma_0(\cpair{s}{p}) = (\cpair{s'}{p'})$ and $\gamma_0(\cpair{s}{p}) = (\cpair{s'}{*})$, respectively.
\end{defn}

\begin{proposition}\label{P:curry-h}
  Let $\L$ be a \isos{} specification with its associated transition function~$\gamma_0$
  and operational model $\gamma$. Then
  \[
    \gamma=\oname{curry}(\gamma_0)\colon \mu\Sigma \to (\stimes{(\mu\Sigma + 1)})^S.
  \]
\end{proposition}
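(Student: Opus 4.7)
The plan is to exploit the universal property that defines the operational model~$\gamma$: by diagram \eqref{eq:gamma} instantiated at the GSOS law $\hat\delta$, the map $\gamma$ is the \emph{unique} $T$-coalgebra structure on $\mu\Sigma$ satisfying
\[
  \gamma \comp \ini \;=\; T\hat\ini \comp \hat\delta_{\mu\Sigma} \comp \Sigma\langle \id,\gamma\rangle.
\]
Hence it suffices to prove that $\operatorname{curry}(\gamma_0)$ satisfies this very equation (with $\gamma$ replaced by $\operatorname{curry}(\gamma_0)$ on both sides), and then appeal to uniqueness.

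Since both sides of the equation are maps out of $\Sigma(\mu\Sigma)$, one may uncurry and check the equality pointwise on generators $\f(t_1,\dots,t_n)$. I would show that the uncurried version of $\operatorname{curry}(\gamma_0) \comp \ini$, namely the map
\[
  \stimes{\Sigma(\mu\Sigma)} \xra{\id\product \ini} \stimes{\mu\Sigma} \xra{\gamma_0} \stimes{(\mu\Sigma\coproduct\term)},
\]
coincides with the uncurried version of the right-hand side. Applied to $(\cpair{s}{\f(t_1,\dots,t_n)})$, the left-hand side is $\gamma_0(\cpair{s}{\f(t_1,\dots,t_n)})$, which by the recursive clause of \autoref{def:transition} equals $m\bigl(\delta^{\f}_{\mu\Sigma}(\cpair{s}{(d_1,\dots,d_n)})\bigr)$ with $d_j=(\cpair{t_j}{\gamma_0(\cpair{s}{t_j})})$.

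The other side, after applying $\Sigma\langle\id,\operatorname{curry}(\gamma_0)\rangle$, yields $\Sigma(\mu\Sigma\product T(\mu\Sigma))$-data of the form $\f((t_1,\operatorname{curry}(\gamma_0)(t_1)),\dots,(t_n,\operatorname{curry}(\gamma_0)(t_n)))$ paired with $s$, and then passes through the composite defining $\hat\delta_{\mu\Sigma}$ in \eqref{eq:embed}. I would chase this composite step-by-step: the strength $\strength$ distributes the state $s$ into each pair, the iso rearranges it to $\Sigma(X\product \stimes{TX})$, and $\Sigma(\id\product \ev)$ then evaluates each state transformer $\operatorname{curry}(\gamma_0)(t_j)$ at the shared state~$s$. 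By definition of currying/uncurrying, this evaluation produces exactly $\gamma_0(\cpair{s}{t_j})$, so that the argument fed to~$\delta_{\mu\Sigma}$ matches $(\cpair{s}{(d_1,\dots,d_n)})$. Postcomposing with $T\hat\ini$ amounts, after uncurrying, to postcomposing with the map $m$, giving exactly $m(\delta^{\f}_{\mu\Sigma}(\cpair{s}{(d_1,\dots,d_n)}))$.

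The main obstacle is purely bookkeeping: tracking the interplay between the strength $\strength$, the evaluation map $\ev$, and the passage between $(S\product(-))$ and $((-)^S)$ under currying, so as to verify that the evaluation of the $T$-branch of $\Sigma\langle\id,\operatorname{curry}(\gamma_0)\rangle(\f(\ldots))$ at the common input state $s$ produces precisely the tuple $(d_1,\dots,d_n)$ of \autoref{def:transition}. Once this diagram chase is carried out, the two sides match on every generator of $\mu\Sigma$, so by uniqueness of $\gamma$ we conclude $\gamma=\operatorname{curry}(\gamma_0)$.
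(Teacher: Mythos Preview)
Your proof is correct, and at its core uses the same idea as the paper: the uniqueness of~$\gamma$ in \eqref{eq:gamma}, together with the verification that $\operatorname{curry}(\gamma_0)$ satisfies the same defining equation. The difference lies in presentation. The paper first isolates an abstract \emph{primitive recursion with parameters} principle---combining ordinary primitive recursion with the strength of~$\Sigma$---and proves once and for all, via a diagrammatic argument, that a map~$h\colon Y\times\mu\Sigma\to A$ satisfies the uncurried scheme iff $\operatorname{curry}(h)$ satisfies the curried one; it then observes that the recursive clause of \autoref{def:transition} is exactly the uncurried scheme whose curried counterpart is \eqref{eq:gamma}. You carry out the same verification concretely, tracing the composite \eqref{eq:embed} that defines~$\hat\delta$ step by step on elements and checking that evaluating $\operatorname{curry}(\gamma_0)(t_j)$ at the shared state~$s$ reproduces the tuple $(d_1,\dots,d_n)$ of \autoref{def:transition}. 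The paper's route buys a reusable categorical lemma valid for any strong endofunctor on a cartesian closed category; yours is shorter for this single instance and needs no additional machinery.
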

The proof makes use of an induction principle
  that combines
  \emph{primitive recursion} (see
  e.g.~\cite[Prop.~2.4.7]{DBLP:books/cu/J2016}) and \emph{induction
    with parameters} (see
  e.g.~\cite[Exercise~2.5.5]{DBLP:books/cu/J2016}).

\begin{defn}
  The \emph{resumption semantics} of a \isos{} specification $\L$ is
  given by
  \[
    \brack{-}_{\L}=\beh_{\hat{\delta}} \colon \mu\Sigma\to \nu T,
  \]
  where $\delta$ is the \isos{} law associated to $\L$,
  $\hat\delta$ is as per \eqref{eq:embed}, and $\beh$ is defined in
  \eqref{eq:beh}. Let $\sim_\L$ denote the corresponding behavioural
  equivalence, that is, $p\sim_\L q$ iff
  $\brack{p}_\L = \brack{q}_\L$ for a given pair $p,q\in \mu\Sigma$.  We drop
  subscripts if $\L$ is clear from the context.
\end{defn}

\noindent Note that since $T$ preserves weak pullbacks, $\sim_\L$
coincides with $T$-bisimilarity in the operational model
$\gamma\colon\mu\Sigma\to T(\mu\Sigma)$~\cite{rutten00}. From the discussion in \autoref{sec:gsos} we immediately get
\begin{theorem}
  \label{prop:finecong}
  The resumption semantics of \isos{} specifications is compositional.
\end{theorem}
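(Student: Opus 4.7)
The plan is to observe that the statement follows directly from the established theory of GSOS laws sketched in \autoref{sec:gsos}, applied to the induced GSOS law $\hat\delta$. Specifically, the compositionality of $\beh_\rho$ for any GSOS law $\rho$ is a consequence of the fact that $\beh_\rho$ is a $\Sigma$-algebra homomorphism; we only need to verify that our situation fits into this framework and then read off the conclusion.

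First, I would make explicit that the map $\hat\delta$ constructed in \eqref{eq:embed} is indeed a GSOS law of $\Sigma$ over $T$ in the sense of \autoref{def:gsos}. By construction, $\hat\delta_X$ is the curried form of a composite of natural transformations (the strength $\strength$, the evident isomorphism, the strength applied inside $\Sigma$ combined with evaluation, and the \isos{} law $\delta$), each of which is natural in $X$, so $\hat\delta\colon \Sigma(\Id\times T)\nat T\Sigmas$ is a natural transformation of the required type.

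Next, I would invoke the standard consequence of mathematical operational semantics recalled at the end of \autoref{sec:gsos}: for the GSOS law $\hat\delta$, the final map $\beh_{\hat\delta}\colon \mu\Sigma\to\nu T$ coincides with the unique $\Sigma$-algebra homomorphism from the initial algebra $(\mu\Sigma,\iota)$ into the denotational model $(\nu T,\alpha)$, where $\alpha\colon\Sigma(\nu T)\to\nu T$ is the algebra structure determined by diagram~\eqref{eq:alpha}. Since $\brack{-}_{\L}=\beh_{\hat\delta}$ by definition, we obtain, for every $n$-ary $\f\in\Sigma$ and $p_1,\ldots,p_n\in\mu\Sigma$,
\[
\brack{\f(p_1,\ldots,p_n)}_{\L} \;=\; \alpha\bigl(\f\bigl(\brack{p_1}_{\L},\ldots,\brack{p_n}_{\L}\bigr)\bigr).
\]

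Finally, I would conclude compositionality: if $\brack{p_i}_{\L}=\brack{q_i}_{\L}$ for $i=1,\ldots,n$, then applying $\alpha\circ\Sigma(\argument)$ to both tuples yields $\brack{\f(p_1,\ldots,p_n)}_{\L}=\brack{\f(q_1,\ldots,q_n)}_{\L}$, i.e.\ $\sim_{\L}$ is a $\Sigma$-congruence. No step poses a real obstacle; the only thing to double-check is the naturality of the composite defining $\hat\delta$, which is entirely routine since strength and evaluation are natural and $\delta$ is natural by assumption. In short, the content of the theorem is entirely concentrated in having rephrased the \isos{} law as a GSOS law, after which compositionality comes for free from the bialgebraic framework.
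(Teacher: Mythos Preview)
Your proposal is correct and matches the paper's approach exactly: the paper states the theorem as an immediate consequence of the discussion in \autoref{sec:gsos}, where it is noted that $\beh_\rho$ is a $\Sigma$-algebra homomorphism, and you have simply spelled out this inference in detail. There is no additional content in the paper's own treatment beyond what you have written.
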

\noindent Resumption semantics is very fine-grained, essentially
because it does not pass the output state of a computation step on as
the input state of the next step; that is, resumption semantics
assumes that the environment takes complete control in between
steps. For instance, consider the \while{} programs%
\[
  t_{1}
  =
  \big(\mathtt{x} \mathbin{\coloneqq} 1\texttt{;}\ \mathtt{x}\mathbin{\coloneqq} \mathtt{x} + 1\big)
  \qquad\text{and}\qquad
  t_{2}
  =
  \big(\mathtt{x} \mathbin{\coloneqq} 1\texttt{;}\ \mathtt{x} \mathbin{\coloneqq}
  \mathtt{x} * 2\big).
\]
The resumption semantics of these programs in each case consists in an
$S$-branching tree of depth~$2$, in which the edge from the root to its $s$-th child is labelled $s[x\leftarrow 1]$ and the edges at the next level
are correspondingly labelled according to the effect of the
assignments $\mathtt{x}\mathbin{\coloneqq} \mathtt{x} + 1$ and
$\mathtt{x} \mathbin{\coloneqq} \mathtt{x} * 2$, respectively. In
particular, the semantics of the two programs differ -- as intuitively
expected under a resumption semantics, since the environment may
manipulate the value of~$x$ in between the two assignments. To obtain
a more coarse-grained notion of process equivalence, we have to
quotient the semantic domain $\nu T$ further.




\subparagraph{Trace Semantics}
\label{sec:quot1}
Consider the set functor $B$ given by
\[BX = \stimes{(X \coproduct \term)};\]
thus $TX=(BX)^S$. Recall from \autoref{ex:final-coalgebras} that the final
coalgebra $\nu B$ is carried by the set $S^{\mplus}+S^{\omega}$ of
possibly terminating $S$-streams. The set~$(\nu B)^{S}$ serves as the
semantic domain for \emph{trace semantics} for imperative
programs~\cite{DBLP:conf/tphol/NakataU09,
  DBLP:journals/cl/PatrignaniC15, DBLP:conf/csfw/PatrignaniDP16},
which associates to a program the possibly terminating sequence of
states it computes from a given initial state. In order to formally
introduce trace semantics in our setting, we proceed to construct a
quotient map $\nu T \epito (\nu B)^{S}$ by coinduction. To this
end, we define the functor $\cofun \c \Coalg T \to \Coalg B$, which
maps a $T$-coalgebra $(C,\zeta)$ to the $B$-coalgebra
\[
  \bar{\zeta}
  =
  \stimes{C}
  \xrightarrow{\tpair{\id}{\zeta}}
  \stimes{(BC)^S}
  \xto{\ev}
  BC = \stimes{(C+1)}
  \xto{\sbrks{\strength}}
  \stimes{(\stimes{C} + 1)}
  =
  B(\stimes{C}),
\]
where $\strength\colon \stimes{(C+1)} \to \stimes{C} + 1$ is the
strength of the functor $(\argument) + 1$, given by
$(\cpair{s}{c}) \mapsto (\cpair{s}{c})$ and
$(\cpair{s}{*}) \mapsto *$. 
Intuitively, while $\zeta^{\sharp}\colon C \to \nu T$ (see
\autoref{sec:prelim} for the notation) maps a coalgebra state of~$C$
to its tree of state transformers,
$\overline{\zeta}^{\sharp}(\cpair{s}{x})\in \nu B$ executes all these
state transformers without interruption, beginning at~$s$ and feeding
the output state of each previous step to the next step, and outputs
the intermediate states reached in each step.  Applying $\cofun$ to
the final coalgebra $(\nu T, \ter)$, we obtain a
$B$-coalgebra $(\stimes{\nu T}, \ol\ter)$, and currying the unique
coalgebra homomorphism $\ol\ter^\sharp\colon \stimes{\nu T} \to \nu B$
yields the desired quotient map%
\begin{equation}\label{eq:str}
  \trc=\curry(\overline{\ter}^{\sharp}) \c \nu T \epito (\nu B)^{S}.
\end{equation}

\begin{proposition}\label{prop:trc-surjective}
  The map $\trc$ is surjective.
\end{proposition}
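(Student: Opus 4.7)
The plan is to exhibit an explicit right inverse $\sigma\colon (\nu B)^S\to \nu T$ of $\trc$, so that every $f\in (\nu B)^S$ arises as the trace of some tree. I will obtain $\sigma$ as $\zeta^{\sharp}$ for a suitable $T$-coalgebra $\zeta$ on $(\nu B)^S$, using finality of $\nu T$. The case $S=\emptyset$ is trivial since then both $(\nu B)^S$ and $\nu T$ are singletons, so I shall assume $S\neq \emptyset$ and fix an element $u_0\in \nu B$.

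First I would define $\zeta\colon (\nu B)^S\to T((\nu B)^S)$ as follows. Given $f\in (\nu B)^S$ and $s\in S$, write $f(s)=s'w$ with $s'\in S$ and $w\in S^{*}+S^\omega$ as in \autoref{ex:final-coalgebras}. If $w$ is empty, set $\zeta(f)(s)=(s',*)$; otherwise set $\zeta(f)(s)=(s',f'_s)$, where $f'_s\in (\nu B)^S$ is the function with $f'_s(s')=w$ and $f'_s(t)=u_0$ for $t\neq s'$. Intuitively, the tree $\zeta^{\sharp}(f)\in \nu T$ is designed so that from input state $s$ its first output is $s'$ and the child reached via that edge continues to produce the tail $w$ when fed its own output state $s'$; the filler value $u_0$ is never visited along the relevant path.

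Next I would verify that $\trc\circ \zeta^{\sharp}=\id$. Consider the evaluation map $h\colon S\times (\nu B)^S\to \nu B$ with $h(s,f)=f(s)$. The choices of $s'$ and $f'_s$ above are engineered precisely so that $h$ becomes a $B$-coalgebra morphism from $(S\times (\nu B)^S,\ol{\zeta})$ to the final $B$-coalgebra $\nu B$: this reduces to a direct case split on whether $w$ is empty. Finality of $\nu B$ then gives $h=\ol{\zeta}^{\sharp}$. On the other hand, functoriality of $\cofun\colon \Coalg T\to \Coalg B$, applied to the coalgebra morphism $\zeta^{\sharp}\colon ((\nu B)^S,\zeta)\to (\nu T,\ter)$, yields $\ol{\zeta}^{\sharp}=\ol{\ter}^{\sharp}\circ(\id_S\times \zeta^{\sharp})$. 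Uncurrying the definition of $\trc$ from \eqref{eq:str}, I conclude $\trc(\zeta^{\sharp}(f))(s)=\ol{\ter}^{\sharp}(s,\zeta^{\sharp}(f))=\ol{\zeta}^{\sharp}(s,f)=f(s)$, so $\trc\circ \zeta^{\sharp}=\id$ and $\trc$ is surjective.

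The only mildly delicate point is the last identification, which requires checking that $\cofun$ acts on a $T$-coalgebra morphism $k\colon (X,\zeta)\to (Y,\eta)$ by $\id_S\times k$ (a consequence of the naturality of strength and of evaluation), and that $\ol{(-)}^{\sharp}$ therefore intertwines with $(-)^{\sharp}$ via the product with $S$. Once this bookkeeping is in place, the definition of $\zeta$ essentially writes itself from the requirement that $(s,f)\mapsto f(s)$ be a $B$-coalgebra morphism.
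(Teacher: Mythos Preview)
Your argument is correct and follows the same overall strategy as the paper: equip $(\nu B)^S$ with a $T$-coalgebra structure $\zeta$, take the section $\zeta^\sharp\colon (\nu B)^S\to \nu T$, and verify $\trc\circ\zeta^\sharp=\id$ by showing that evaluation is a $B$-coalgebra morphism out of $\overline{\zeta}$.

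The one noteworthy difference is in the choice of $\zeta$. You set $\zeta(f)(s)=(s',f'_s)$ where $f'_s$ agrees with the tail $w$ at $s'$ and takes a filler value $u_0$ elsewhere; this forces the case split on $S=\emptyset$ and the arbitrary choice of $u_0$. The paper instead uses the diagonal, defining $\zeta=T\Delta\circ\beta^S$, which sends $f$ with $f(s)=s'w$ to $(s',c_w)$ where $c_w$ is the \emph{constant} function with value $w$. This is slightly cleaner: it is a purely categorical composite, requires no filler, and works uniformly for empty $S$. Conversely, your appeal to functoriality of $\cofun$ (so that $\overline{\zeta}^\sharp=\overline{\tau}^\sharp\circ(\id_S\times\zeta^\sharp)$ comes for free) packages the verification more efficiently than the paper's explicit diagram chase for the same fact. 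Either combination works; using the diagonal for $\zeta$ together with your functoriality argument would give the shortest proof.
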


\begin{defn}
The \emph{trace semantics} of a \isos{} specification $\L$ is given by
\[
  \sbrack{-}_{\L} = ( \mu\Sigma\xto{\brack{-}_\L} \nu T \xto{\trc}
  (\nu B)^S).
\]
Let $\simeq_\L$ denote the corresponding behavioural equivalence, that
is, $p\simeq_\L q$ iff $\sbrack{p}_\L = \sbrack{q}_\L$, for
$p,q\in \mu\Sigma$.  We drop
  subscripts if $\L$ is clear from the context.
\end{defn}


\begin{rem}
  Equivalently, $\sbrack{-}_\L$ is the curried form of the unique
  $B$-coalgebra homomorphism from $(\stimes{\mu\Sigma},\bar{\gamma})$
  to $\nu B$ (recall that $(\mu\Sigma,\gamma)$ is the operational
  model of~$\L$). Since
  \[
    \bar{\gamma}
    =
    \big(\stimes{\mu \Sigma} \xto{\gamma_0} \stimes{(\mu\Sigma+1)}
    \xto{\sbrks{\strength}} \stimes{(\stimes{\mu\Sigma}+1)} =
    B(\stimes{\mu\Sigma})\big)
  \]
  by definition of $\bar{\gamma}$ and \autoref{P:curry-h}, we see that
  for every $p\in \mu\Sigma$ and $s\in S$, the possibly infinite
  stream $\sbrack{p}_\L(s)=s_1s_2s_3\cdots$ is the sequence of states
  computed by the program $p$ on input state $s$,
  cf. \autoref{def:transition}:
  \[ s,p\to s_1,p_1\to s_2,p_2\to s_3,p_3 \to \cdots.\] Hence trace
  equivalence $p \simeq q$ holds iff for each input state $s$,
  programs $p$ and $q$ produce the same sequence of states.
\end{rem}
The following example demonstrates that trace semantics is generally
not compositional:

\begin{example}
  \label{eq:blatant}
  We extend \while{} by adding a unary operator $\lfloor \cdot \rfloor$ with
  \begin{equation*}
    \inference{\goes{s,p}{s\pr,p\pr}}{\goes{s,\lfloor p \rfloor}{\emptyset,\lfloor p\pr
        \rfloor}}
    \qquad
    \inference{\rets{s,p}{s\pr}}{\rets{s,\lfloor p \rfloor}{s\pr}}
  \end{equation*}
  where $\emptyset$ denotes the store with all variables set to
  $0$. For
  $t_{1} = \big(\mathtt{x} \mathbin{\coloneqq} 1\texttt{;}\ \mathtt{x}
  \mathbin{\coloneqq} \mathtt{x} + 1\big)$ and
  $t_{2} = \big(\mathtt{x} \mathbin{\coloneqq} 1\texttt{;}\ \mathtt{x}
  \mathbin{\coloneqq} \mathtt{x} * 2\big)$, we have that
  $t_{1} \simeq t_{2}$ but
  $\lfloor t_{1} \rfloor \not\simeq \lfloor t_{2} \rfloor$ (since
  in~$\lfloor t_{1} \rfloor$ and $\lfloor t_{2} \rfloor$, the store is
  erased after the first assignment).
\end{example}


\subparagraph{Termination Semantics}
\label{sec:quot2}

As the coarsest of our semantic domains, we shall use the set
\mbox{$(S \coproduct \{\bot\})^{S} \cong (S \coproduct \term)^{S}$}
of state transformers on~$S$ with possible non-termination featuring
pervasively in the denotational semantics of imperative programming
(e.g.~\cite{DBLP:journals/ita/Rutten99, Pitts:1999:ORF:309656.309671,
  DBLP:conf/ac/Pitts00}). In comparison to $(\nu B)^S$, this domain
abstracts from the intermediate steps of the computation.
The essence of this abstraction is captured by the map
\[\oname{fn} \c \nu B \to S \coproduct \term\qquad\text{defined by}\qquad
  \oname{fn} (x) =
  \begin{cases}
    s & \text{if $x$ is finite, with last state $s$},\\
    \bot & \text{otherwise}.
  \end{cases}
\]
\takeout{ 
Note that $\oname{fn}$ can be realized in the following pullback square:
\[
  \begin{tikzcd}
    \mu B
    \pullbackangle{-45}
    \arrow[r, "\oname{final}"]
    \arrow[d, hook, "\oname{inc}"']
    & S \arrow[d,  "\oname{inl}"] \\
    \nu B \arrow[r, "\oname{fn}"] & S \coproduct \term
  \end{tikzcd}
\]
where the initial algebra $\mu B$ is carried by the set of all
non-empty terminating sequences in $S$, $\oname{inc}$ is the inclusion
map, $\oname{inl}$ is the coproduct injection and
$\oname{final} \c \mu B \to S$ maps a terminating stream to its
final state.
\hunote{I don't quite see the relevance of this
  observation. SM: I agree and therefore I took this out.}
}

\begin{defn}
  The \emph{termination semantics} of a \isos{} specification $\L$ is
  given by
  \[
    \ssbrack{-}_{\L} = ( \mu\Sigma\xto{\sbrack{-}_\L} (\nu B)^S
    \xto{\fn^S} (S+1)^S).
  \]
  Let $\approx_\L$ denote the corresponding behavioural equivalence,
  that is, $p\approx_\L q$ iff $\ssbrack{p}_\L = \ssbrack{q}_\L$ for
  $p,q\in \mu\Sigma$.  We drop
  subscripts if $\L$ is clear from the context.
\end{defn}
Thus $p \approx q$ iff for each initial state $s$, if $p$ eventually
terminates with final state $s\pr$ then~$q$ eventually terminates with
final state $s\pr$ and vice-versa. Termination semantics is generally not
compositional: the programs $t_1$ and $t_2$ of
\autoref{eq:blatant} satisfy
$t_{1} \approx t_{2}$ but $\lfloor t_{1} \rfloor \not\approx \lfloor t_{2} \rfloor$.

The maps introduced in this section are summarized in the following commutative diagram:
\begin{equation}\label{eq:semantic-maps}
  \begin{tikzcd}
    &
    \mu\Sigma
    \ar[bend right]{dl}[swap,near end,inner sep=1]{\brack{-}_\L}
    \ar{d}{\sbrack{-}_\L}
    \ar[bend left,near end, inner sep=0]{dr}{\ssbrack{-}_\L}
    \\
    \nu T \ar[two heads]{r}{\trc}
    &
    (\nu B)^{S} \ar[two heads]{r}{\fn^{S}}
    &
    (S+1)^{S}
  \end{tikzcd}
\end{equation}

\subsection{Compositionality is Undecidable}
\label{sec:undec}

We have seen that in contrast to resumption semantics, both trace and
termination semantics generally fail to be compositional. As it turns
out, reasoning about compositionality in these two cases is a very
complex, viz.~undecidable, task.

To make the ensuing decision problems precise, we fix suitable
encodings of states and terms as finite strings and regard a \isos{} specification $\L$ as a total
function that assigns to a given operation symbol, input state and
list of premisses the target of the conclusion and output state of the
respective rule. From a computational point of view, a minimum
requirement on every reasonable specification $\L$ is that it admits
some finite representation. Hence, for
simplicity, we assume in the following theorem that specifications are
primitive recursive functions. For instance, this is clearly the case
for the \while{} language.
\begin{theorem}\label{thm:compositionality-undecidable}
  It is undecidable whether the trace semantics (or termination
  semantics, respectively) induced by a primitive recursive \isos{}
  specification is compositional.
\end{theorem}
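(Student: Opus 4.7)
The plan is to reduce the halting problem for Turing machines to the complement of the compositionality problem, simultaneously yielding undecidability for both the trace and termination variants.

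Given a TM $M$ and an input $w$, we construct a primitive recursive \isos{} specification $\L_{M,w}$ as follows. Its base is the \while{} fragment of \autoref{ex:while}, enriched by a single extra unary operator $\fii$. Fix a primitive recursive encoding of stores $s\in S$ as natural numbers and let $H\colon S\to\{0,1\}$ be the primitive recursive predicate with $H(s)=1$ iff $M$ halts on $w$ within $s$ computation steps. The rules for $\fii$ are defined conditionally on $H$: when $H(s)=0$, they make $\fii$ behave as an identity operator on input state $s$ (copying the argument's output state and wrapping the continuation in $\fii$); when $H(s)=1$, they make $\fii$ behave as the state-erasing operator $\lfloor\cdot\rfloor$ of \autoref{eq:blatant} (replacing the output state by the zero store $\emptyset$). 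Since $H$ and all base rules are primitive recursive, so is $\L_{M,w}$. We also adjust the encoding so that $H(\emptyset)=0$, e.g.\ by prefixing $M$ with a dummy initial step.

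For witnesses we reuse the programs $t_1$ and $t_2$ from \autoref{eq:blatant}, which satisfy $t_1\simeq t_2$ and $t_1\approx t_2$ in pure \while{}. If $M$ halts on $w$ at some step count $n_0$, then $H(n_0)=1$, and on input state $n_0$ the calculations of $\fii(t_1)$ and $\fii(t_2)$ proceed exactly as in \autoref{eq:blatant}: the first step erases the store to $\emptyset$, after which (since $H(\emptyset)=0$) executing the respective second assignments on $\emptyset$ produces different final stores. Thus $\fii(t_1)\not\simeq\fii(t_2)$ and $\fii(t_1)\not\approx\fii(t_2)$, so compositionality fails in both semantics. Conversely, if $M$ never halts then $H\equiv 0$, $\fii$ collapses to a literal identity operator on every input state, and compositionality of $\L_{M,w}$ reduces to that of pure \while{}, which is dispatched by a direct structural and coinductive argument on its operators. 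Hence $\L_{M,w}$ is trace- (resp.\ termination-) compositional iff $M$ does not halt on $w$, and the undecidability of the halting problem yields the theorem.

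The main obstacle is the converse direction, where one must rule out \emph{any} failure of compositionality when $\fii$ collapses to identity. Sequential composition and assignment preserve both equivalences in a straightforward way, but the \texttt{while}-loop case requires a coinductive invariant relating its unfoldings. A cleaner alternative, at the cost of invoking machinery developed later in the paper, is to choose the $H(s)=0$ rules for $\fii$ and the base syntax so that the resulting specification fits the streamlined (resp.\ cool) \isos{} format, which automatically certifies compositionality of trace (resp.\ termination) semantics.
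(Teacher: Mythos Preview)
Your reduction is correct, and the overall strategy---make the specification ``misbehave'' iff the machine halts---matches the paper's. The execution, however, is quite different.

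The paper builds a bespoke seven-symbol language from scratch, with a state space that explicitly contains Turing machine configurations. Two constants $i,j$ simulate~$M$ step by step in the state component; upon halting they branch into distinguishable but trace-equivalent loops $c,d$, and a unary context operator $u$ then separates them. Crucially, when~$M$ does not halt the paper disposes of the converse direction \emph{by hand}: it enumerates all programs of this small language, computes~$\simeq$ (resp.~$\approx$) explicitly as the least equivalence containing certain listed pairs, and checks directly that this relation is closed under~$u$. This is tedious but entirely self-contained and requires no forward reference.

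Your construction instead piggybacks on \while{}: you add one operator $\fii$ whose rule at state~$s$ is selected by a primitive recursive bounded-halting predicate~$H$, and you reuse the off-the-shelf witnesses $t_1,t_2$ and $\lfloor\cdot\rfloor$ from \autoref{eq:blatant}. The forward direction is clean. For the converse, however, you are effectively asserting that \while{} (plus an identity wrapper) is trace- and termination-compositional. That is true, but it is not free: your ``direct structural and coinductive argument'' for \texttt{while} and \texttt{;} is precisely the work carried out later in the proofs of \autoref{th:sl-congruence} and \autoref{th:fa-congruence}, specialised to one language. Your stated alternative---observing that when $H\equiv 0$ the specification is literally streamlined and cool---is the cleanest route, but forward-references those theorems. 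Either route is sound; neither is as self-contained as the paper's explicit enumeration, though yours is arguably more illuminating in that it exhibits the undecidability result and the positive format results as two sides of the same coin.

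One small point to tighten: you need some store $s_0$ with $H(s_0)=1$ whenever~$M$ halts, i.e.\ the encoding $S\to\Nat$ must be unbounded; this holds for any reasonable encoding but should be stated.
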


\begin{proof}[Proof sketch]
The halting problem reduces to the compositionality problem. The idea is to take programs akin to $t_1$ and $t_2$ in \autoref{eq:blatant} and precompose them with the simulation of a given Turing machine. This can be specified in \isos{}. The failure of compositionality described in \autoref{eq:blatant} then occurs if, and only if, the simulated machine halts.
\end{proof}
\noindent In view
of the fact that there is no sound and complete decision procedure for
compositionality w.r.t.~$\simeq$ and $\approx$, we instead move on to
identify easily checked \emph{sound} syntactic criteria that, although
necessarily incomplete, are sufficiently broad.

\section{Cooling the \Isos{} Format}\label{sec:cool}

We now introduce two sets of restrictions on the \isos{} rule format,
called \emph{streamlined \isos{}} and \emph{cool \isos{}}, that
guarantee trace and termination semantics, respectively, to be
compositional. Our approach is inspired by the work of
Bloom~\cite{DBLP:journals/tcs/Bloom95} and van
Glabbeek~\cite{DBLP:journals/tcs/Glabbeek11} on the \emph{cool}
congruence formats for weak bisimilarity for GSOS specifications. The
following definition will help describe the restricted formats. We
make pervasive use of the abbreviations from \autoref{rem:omit}, and
we will additionally employ $\goes{s,p}{s',*}$ as an alternative
notation for a terminating literal $\rets{s,p}{s'}$.
\begin{defn}
  Let $\mathcal{L}$ be a \isos{} specification.
  \begin{enumerate}
  \item An $n$-ary operator $\f$ is
    \emph{passive} if all rules for $\f$ are of the form
    \[
      \inference{}{\goes{s,\f(x_{1},\dots,x_{n})}{s', t}} \qquad \text{where $t \in \Sigmas(\{x_{1},\dots,x_{n}\})$ or $t=\ast$.}
    \]
    In other words, the one-step behaviour of $\f(x_1,\ldots, x_n)$
    does not depend on the one-step behaviour of any of its subterms.
    In particular, every constant 
    is passive. An \emph{active} operator is one which is not passive.

  \item A progressing rule for an $n$-ary operator $\f$ is
    \emph{receiving at position $j\in \{1,\ldots, n\}$} if its~$j$-th
    premiss $\goes{s,x_{j}}{s',y_j}$ is progressing and the variable~$y_j$
    appears in the target of the conclusion. We say that the rule is
    \emph{receiving} if it is receiving at some  position $j$.
  \end{enumerate}
\end{defn}

\subsection{Streamlined \Isos{}}
\label{sec:streamlinedimpsos}
\noindent As indicated above, the streamlined \Isos{} format,
introduced next, will guarantee compositionality of trace
semantics.
\begin{defn}
  \label{def:sl}

  A \isos{} specification is \emph{streamlined} if for every active
  operator $\f$ of arity $n$ there exists $j\in \{1,\ldots,n\}$ (the
  \emph{receiving position} of $\f$) such that the following holds:
  \begin{enumerate}
  \item All receiving rules for $\f$ are of the form
    \[
      \inference{\goes{s,x_{j}}{s',y_j}}{\goes{s,\f(x_{1},\dots,x_{n})}{s',t}}\qquad \text{where $t=\f(x_{1},\dots,x_{n})[y_j/x_{j}]$ or $t=y_j$;}
    \]
    here, $[u/x]$ denotes substitution of the variable $x$ by the term $u$.
  \item All non-receiving rules for $\f$ are of the form
      \[
        \inference{l_{1} && l_{2} && \cdots && l_{n}}{\goes{s,\f(x_{1},\dots,x_{n})}{s',t}}
      \qquad\text{where $t \in \Sigmas(\{x_{1},\dots,x_{n}\} \smin \{x_{j}\})$ or
      $t=\ast$}.\]

\end{enumerate}
\end{defn}

\noindent Note that in a \isos{} specification, receiving rules for an
active operator~$\f$ are receiving \emph{only} in the receiving
position of~$\f$. What~\autoref{def:sl} boils down to is that an
active operator can only progress its subterm at the receiving
position~$j$, leaving everything else unchanged and making sure that
the output state in the $j$-th premiss is correctly propagated, and
discards the $j$-th subterm once it terminates.

\begin{example}\label{ex:while-streamlined}
  The \while{} language (cf.~\autoref{fig:while1}) is streamlined. The
  only active operator is sequential composition $p\texttt{;}~q$. Its
  progressing rules are receiving in the left position, and upon
  termination the left subterm is discarded.
\end{example}
Further examples are discussed after \autoref{cor:str-cong}.


\begin{theorem}
  \label{th:sl-congruence}
  Trace semantics is compositional for streamlined \isos{} specifications.
\end{theorem}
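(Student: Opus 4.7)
The plan is to establish compositionality of trace semantics by exhibiting a $B$-bisimulation on the coalgebra $(\stimes{\mu\Sigma}, \bar{\gamma})$ containing every pair $((s, \f(\vec p)), (s, \f(\vec q)))$ with $\f \in \Sigma$, $s \in S$, and $p_i \simeq q_i$ for all $i$. Since, by finality of $\nu B$ and the definition of $\sbrack{-}_\L$, we have $p \simeq q$ iff $(s, p)$ and $(s, q)$ are $B$-bisimilar in $(\stimes{\mu\Sigma}, \bar{\gamma})$ for every $s$, producing such a bisimulation immediately yields $\f(\vec p) \simeq \f(\vec q)$ and hence compositionality.

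I would take $R$ to be the smallest relation on $\stimes{\mu\Sigma}$ that (i) contains every pair $((s, p), (s, q))$ with $\sbrack{p}_\L(s) = \sbrack{q}_\L(s)$; (ii) is closed under $\Sigmas \V$-substitution with fully trace-equivalent arguments, i.e.\ contains $((s, t[\vec p]), (s, t[\vec q]))$ whenever $t \in \Sigmas \V$ and $p_i \simeq q_i$ for all $i$; and, crucially, (iii) for every active $\f$ with receiving position $j$ contains $((s, \f(\vec p)), (s, \f(\vec q)))$ whenever $p_i \simeq q_i$ for $i \neq j$ and $((s, p_j), (s, q_j)) \in R$. Clause (iii) accommodates the weaker equivalence present at the receiving position after a step. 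Verifying that $R$ is a $B$-bisimulation proceeds by a nested case analysis that exploits the streamlined restrictions: for a substitution pair in (ii) with head $\f$ passive, the unique rule at $s$ does not inspect the subterms, so both sides step identically, and the residual is again a substitution pair of the form in (ii); for a substitution pair with head $\f$ active of receiving position $j$, the first step is driven entirely by $t_j[\vec p]$, and the restricted shape of receiving targets --- either $\f(\dots, y_j, \dots)$ or $y_j$ --- ensures that the residual belongs to clause (iii) or reduces to a pair covered by clause (i), respectively; pairs already in (i) or (iii) are handled analogously.

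The main obstacle is the apparently circular step in the active case: to match the first step of $\f(\vec{t}[\vec p])$ against $\f(\vec{t}[\vec q])$ I must rely on bisimilarity of $(s, t_j[\vec p])$ and $(s, t_j[\vec q])$, which is precisely what the proof is meant to establish via $R$. I would resolve this by interpreting the verification as a bisimulation-up-to-context argument; its soundness here is ensured by the fact that the behaviour functor $B$ preserves weak pullbacks. The streamlined restriction is exactly what makes this enhancement succeed: the target of a receiving rule never references or reshuffles the non-receiving-position subterms, so full trace equivalence at those positions is preserved across every step, while the single-state bisimilarity at the receiving position is precisely the information produced by matching a single step of the $j$-th subterm.
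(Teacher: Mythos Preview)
Your coinductive outline is a reasonable alternative to the paper's route, and clause~(iii) correctly isolates the key strengthening: at the receiving position of an active operator one only needs single-state equivalence, not full trace equivalence. The paper makes the same move, but by step-indexing rather than bisimulation: it shows by induction on~$k$ that the $k$-step trace equivalence $\simeq_k$ is a congruence, with the inductive claim strengthened so that at the receiving position of an active~$\f$ one only assumes $p_j\simeq_{k,s} q_j$ for the particular input state~$s$.

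There is, however, a genuine gap in your resolution of the circularity. First, bisimulation up to context concerns where \emph{residuals} may land; it does not license you to assume bisimilarity of $(s,t_j[\vec p])$ and $(s,t_j[\vec q])$ when matching the \emph{first} transition. That first-step matching must be discharged directly (e.g.\ by structural induction on $t_j$). Second, and more importantly, there is a further circularity you do not flag: after a receiving step on a clause-(ii) pair with active head $\f(t_1,\ldots,t_n)$, the residual has the $j$-th argument replaced by the residual of $t_j[\vec p]$, while the non-receiving arguments are still $t_m[\vec p]$ versus $t_m[\vec q]$. For this residual to fall under your clause~(iii) you need $t_m[\vec p]\simeq t_m[\vec q]$, which is exactly the congruence being proved. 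Up to context does not rescue this either: weak pullback preservation of $B$ gives soundness of up-to-\emph{bisimilarity}, not up-to-\emph{context}; the latter needs a distributive law of $\Sigma$ over the behaviour functor, and here the GSOS law is over $T$, not $B$, with no $\Sigma$-algebra structure on $S\times\mu\Sigma$ compatible with $\bar\gamma$.

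The paper's step-indexing dissolves both issues at once. At step $k{+}1$, the first transition of the $j$-th argument is matched because $p_j\simeq_{k+1,s} q_j$ implies $p_j\simeq_{1,s} q_j$; and the residual, which involves complex terms $t[\vec p]$ versus $t[\vec q]$ at the non-receiving positions, is handled using the inductive hypothesis that $\simeq_k$ is already a congruence. You could repair your $R$ by replacing the uses of $\simeq$ in clauses~(ii) and~(iii) by ``in $R$ at every state'', but unfolding that definition just reproduces the step-indexed argument.
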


\begin{proof}[Proof sketch]
  For $p,q\in \mu\Sigma$ and $k\in \Nat$ we put $p\simeq_k q$ if the
  programs $p$ and $q$ are $k$-step trace equivalent, that is, for
  every $s\in S$ the streams $\sbrack{p}(s)$ and $\sbrack{q}(s)$ have
  the same prefix of length at most $k$. By induction on $k$ one
  proves~$\simeq_k$ to be a congruence, using a judicious
  strengthening of the inductive claim for receiving positions of
  active operators. This implies that $\simeq$ is a congruence, whence
  trace semantics is compositional.
\end{proof}
From \autoref{th:sl-congruence} we can deduce a slightly stronger
statement. In what follows, the \emph{kernel} of a map
$e\colon X\to Y$ is the equivalence relation on $X$ relating $x,x'$ iff $e(x)=e(x')$.%
\begin{corollary}\label{cor:str-cong}
  For every streamlined \isos{} specification, the kernel of the map
  $\trc\colon \nu T \epito (\nu B)^S$ is a congruence w.r.t.~the canonical $\Sigma$-algebra structure on $\nu T$ as per \eqref{eq:alpha}.
\end{corollary}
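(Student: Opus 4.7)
The plan is to reduce the statement to \autoref{th:sl-congruence} by extending the signature $\Sigma$ with a constant naming each element of $\nu T$ and invoking compositionality of trace semantics on the resulting specification.

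Concretely, let $\Sigma'$ be $\Sigma$ augmented by a fresh constant $c_x$ for every $x\in \nu T$, and extend $\L$ to a \isos{} specification $\L'$ over $\Sigma'$ by adding, for each such $x$ and each $s\in S$, a rule $\rets{s,c_x}{s'}$ if $\ter(x)(s)=(\cpair{s'}{*})$, and a rule $\goes{s,c_x}{s',c_{x'}}$ if $\ter(x)(s)=(\cpair{s'}{x'})$. The new operators have arity $0$ and are therefore passive in the sense of \autoref{def:sl}, so $\L'$ remains streamlined.

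The key observation is that $\brack{c_x}_{\L'}=x$ for every $x\in \nu T$. Indeed, the subset $\{c_x\mid x\in \nu T\}\subseteq \mu\Sigma'$, equipped with the restriction of the operational model $\gamma'$ of $\L'$, forms a sub-$T$-coalgebra of $(\mu\Sigma',\gamma')$ that, by construction, is isomorphic to $(\nu T,\ter)$ via $c_x\leftrightarrow x$; by finality, the restriction of the coalgebra homomorphism $\brack{-}_{\L'}$ must coincide with this isomorphism. Moreover, since $\L$ and $\L'$ have the same rules for $\Sigma$-operators we have $\hat{\delta'}|_{\Sigma(\Id\product T)}=\hat\delta$, and a standard uniqueness argument for \eqref{eq:alpha} then shows that the denotational algebras coincide on $\Sigma$-inputs: $\alpha'|_{\Sigma(\nu T)}=\alpha$. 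Because $\brack{-}_{\L'}$ is a $\Sigma'$-algebra homomorphism, this yields
\[
  \brack{\f(c_{x_1},\dots,c_{x_n})}_{\L'}=\alpha'\bigl(\f(\brack{c_{x_1}}_{\L'},\dots,\brack{c_{x_n}}_{\L'})\bigr)=\alpha(\f(x_1,\dots,x_n))
\]
for every $\f\in \Sigma$ of arity $n$ and $\vec x\in (\nu T)^n$, and hence $\sbrack{\f(c_{x_1},\dots,c_{x_n})}_{\L'}=\trc(\alpha(\f(\vec x)))$.

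To conclude, suppose $\vec x,\vec y\in (\nu T)^n$ satisfy $\trc(x_i)=\trc(y_i)$ for $i=1,\dots,n$. Then $\sbrack{c_{x_i}}_{\L'}=\trc(\brack{c_{x_i}}_{\L'})=\trc(x_i)=\trc(y_i)=\sbrack{c_{y_i}}_{\L'}$, so \autoref{th:sl-congruence} applied to the streamlined specification $\L'$ yields $\sbrack{\f(c_{x_1},\dots,c_{x_n})}_{\L'}=\sbrack{\f(c_{y_1},\dots,c_{y_n})}_{\L'}$, i.e.~$\trc(\alpha(\f(\vec x)))=\trc(\alpha(\f(\vec y)))$; hence the kernel of $\trc$ is a congruence. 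The main technical nuisance is the bookkeeping around the self-referential rules for the $c_x$'s, together with the finality-based justifications of both $\brack{c_x}_{\L'}=x$ and $\alpha'|_{\Sigma(\nu T)}=\alpha$; these are routine but need to be spelled out carefully.
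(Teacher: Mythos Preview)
Your proposal is correct and follows essentially the same strategy as the paper: extend $\Sigma$ by a constant $c_t$ for each $t\in\nu T$ with rules forcing $\brack{c_t}_{\L'}=t$, observe that constants are passive so $\L'$ stays streamlined, and then invoke \autoref{th:sl-congruence} for $\L'$. The paper organizes this as a two-step argument (first the surjective case, then the reduction via constants) and leaves the identity $\alpha'|_{\Sigma(\nu T)}=\alpha$ implicit, whereas you go directly to the extension and spell out both that point and the terminating case for the constants more carefully; but the core idea is the same.
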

We next look at examples of streamlined specifications%
\lsnote{It would be good to have a realistic positive example}
but also at a few pathological cases where compositionality breaks.

\takeout{ 
\begin{example}
  Consider the \isos{} specification~$\L$ extending \while{}
  with a
  binary operator~$\Box$ given by
  \[
    \inference{\goes{s,p}{s\pr,p\pr} &&
      P(s,s\pr)}{\goes{s,p\mathbin{\square}q}{s\pr,p\pr\mathbin{\square}q}} \qquad
    \inference{\goes{s,p}{s\pr,p\pr} && \neg P(s,s\pr)}{\rets{s,p\mathbin{\square}q}{s_0}} \qquad
    \inference{\rets{s,p}{s\pr}}{\rets{s,p\mathbin{\square}q}{s\pr}}
  \]
  for some fixed state~$s_0\in S$. Then~$\L$ is streamlined: The
  operator~$\Box$ is active, with receiving position~$1$, and all
  rules adhere to the relevant restrictions. 
\end{example}}
\begin{example}
  \label{ex:good-sl}
  Streamlined specifications allow for complex control flow over programs,
  including \emph{signal} or \emph{interrupt handling}. For instance, we can
extend \while{} by a distinguished variable
  $\texttt{i}$ serving as an interrupt flag and modify the rules of sequential composition to
  \[
    \begin{array}{l@{\qquad}l@{\qquad}}
      \inference{\rets{s, p}{s\pr}}{\goes{s, (p\texttt{;}\ q)}{s\pr , q}}
      &
      \inference{\goes{s, p}{s\pr, p\pr}}{\goes{s, (p\texttt{;}\ q)}{s\pr ,
        (p\pr \texttt{;}\ q)}}{[\mathtt{i}]_{s} = 0} \\\\
      \inference{\goes{s, p}{s\pr, p\pr}}{\goes{s, (p\texttt{;}\ q)}{s\pr ,
      q}}{[\mathtt{i}]_{s} \neq 0 \wedge P(s\pr)}
      &
        \inference{\goes{s, p}{s\pr, p\pr}}{\goes{s, (p\texttt{;}\ q)}{s\pr ,
        (p\pr \texttt{;}\ q)}}{[\mathtt{i}]_{s} \neq 0 \wedge \neg P(s\pr)}
    \end{array}
  \]
  where $P \subseteq S$. If flag $\texttt{i}$ is enabled and predicate $P$
  is true for the output $s\pr$ of $p$, then $p$ is terminated prematurely. This type of rules can
  also be used to implement \emph{listeners} or \emph{observers} in high-level
  programming languages~\cite{JeffreyR05}.
\end{example}
\begin{example}
  \begin{enumerate}
  \item\label{ex:bad1} Recall the operator~$\lfloor \cdot \rfloor$
    from \autoref{eq:blatant}, which breaks compositionality for trace
    semantics. The operator is active, and its progressing rule is
    receiving but does not propagate the output state of its premiss,
    so the \isos specification of \while{}
    with~$\lfloor \cdot \rfloor$ fails to be streamlined (as it must,
    by \autoref{th:sl-congruence}).

  \item\label{item:interleave} Consider the extension of \while{} with a binary left-first
    interleaving operator $\triangleleft$ specified by the
    rules
    \[
      \inference{\goes{s,p}{s\pr,p\pr}}{\goes{s,p \triangleleft q}{s\pr,q
          \triangleleft p\pr}}
      \qquad
      \inference{\rets{s,p}{s\pr}}{\goes{s,p \triangleleft q}{s\pr,q}}
    \]
    Again, $\simeq$ is not a congruence: For
    $t_{1} = (\mathtt{x} \mathbin{\coloneqq} 2 \texttt{;}\ \mathtt{x}
    \mathbin{\coloneqq} \mathtt{x} + 2)$ and
    $t_{2} = (\mathtt{x} \mathbin{\coloneqq} 2 \texttt{;}\ \mathtt{x}
    \mathbin{\coloneqq} \mathtt{x} * 2)$, we have $t_{1} \simeq t_{2}$
    but
    $t_{1} \triangleleft (\mathtt{x} \mathbin{\coloneqq} 0) \not
    \simeq t_{2} \triangleleft (\mathtt{x} \mathbin{\coloneqq}
    0)$. Indeed the left of the above rules is receiving but the
    target of its conclusion does not have one of the allowed forms.

  \item Extend \while{} with a step-by-step branching operator
    $\mathbin{\triangledown}$ specified by
    \[
      \inference{\goes{s,p}{s_{1},p\pr} && \goes{s,q}{s_{2},q\pr}}{\goes{s,p \mathbin{\triangledown} q}{s_{1},p\pr
          \mathbin{\triangledown} q}}{P(s)}
      \qquad
      \inference{\goes{s,p}{s_{1},p\pr} && \goes{s,q}{s_{2},q\pr}}{\goes{s,p
          \mathbin{\triangledown} q}{s_{2},p \mathbin{\triangledown} q\pr}}{\neg P(s)}
    \]
    and termination in all other cases. If the predicate
    $P \subseteq S$ is, for example, $\mathtt{x} = 0$, then the
    same~$t_1,t_2$ as in \autoref{item:interleave} witness that
    $\simeq$ is not a congruence: We have $t_{1} \simeq t_{2}$ but
    $t_{1} \mathbin{\triangledown} (\mathtt{x} \mathbin{\coloneqq} 0)
    \not \simeq t_{2} \mathbin{\triangledown} (\mathtt{x}
    \mathbin{\coloneqq} 0)$. In this case, the condition that is
    violated is the requirement that all rules for~$\triangledown$ must be
    receiving in the same position.

  \item Consider the operator
    $\lceil \cdot \rceil$ specified by
    \[
      \inference{\goes{s,p}{s\pr,p\pr}}{\goes{s,\lceil p \rceil}{s\pr,\lceil p\pr
          \rceil}}
      \qquad
      \inference{\rets{s,p}{s\pr}}{\goes{s,\lceil p \rceil}{s\pr,p}}
    \]
    Again, $t_{1},t_{2}$ as in \autoref{item:interleave} witness
    failure of congruence: $t_{1} \simeq t_{2}$ but
    $\lceil t_{1} \rceil \not\simeq \lceil t_{2} \rceil$. Indeed, the
    second rule violates~\autoref{def:sl} as~$p$ terminates but is not
    discarded.
  \end{enumerate}
\end{example}

\subsection{Cool \isos{}}
\label{sec:faimpsos}
We now further restrict the streamlined format as follows:
\begin{defn}
  \label{def:cool}
  A \isos{} specification is \emph{cool} if for every active
  operator $\f$ there exists $j\in \{1,\ldots, n\}$ (again called the \emph{receiving position of $\f$}) such that the following holds:
  \begin{enumerate}
    \item\label{item:patience} All rules for $\f$ whose $j$-th premiss is progressing are of the form
      \[
        \inference{\goes{s,x_{j}}{s',y_j}}{\goes{s,\f(x_{1},\dots,x_{n})}{s',\f(x_{1},\dots,x_{n})[y_j/x_{j}]}}
      \]
    \item All rules for $\f$ whose $j$-th premiss is terminating are of the form
      \[
        \inference{\rets{s,x_{j}}{s'}}{\goes{s,\f(x_{1},\dots,x_{n})}{s'',t}}\qquad\text{where $t\in \Sigmas(\{x_{1},\dots,x_{n}\} \smin \{x_{j}\})$ or $t=\ast$},
      \]
      and moreover $s''$ and $t$ depend only on $s'$ but not on $s$.

    \end{enumerate}
    A \isos specification is \emph{uncool} if it is not cool.
  \end{defn}
The cool format asserts that an active operator $\f$ runs its $j$-th
subterm until termination and then discards it, proceeding to a state
derivable from the terminating state of the subterm. In GSOS, rules of
type~\ref{item:patience} (without states) are known as
\emph{patience rules}~\cite{DBLP:journals/tcs/Glabbeek11}.

\begin{example}
  The rules of the \while{} language, which we have already observed
  to be streamlined (\autoref{ex:while-streamlined}), are also cool.
%
\end{example}

\noindent Cool \isos{} specifications are streamlined, and all of the
negative examples from \autoref{sec:streamlinedimpsos} apply here as
well. Here is an example that separates the two concepts:

\begin{example}\label{ex:bad5}
  The sequential composition semantics with interrupts from \autoref{ex:good-sl} is
  uncool, as the third rule has a progressing premiss but is not of
  the form in \autoref{def:cool}.\ref{item:patience}. 
  Indeed, $\approx$ is not a congruence:
  For the predicate $\mathtt{x} = 42$ and the programs
  $t_{1} = (\mathtt{x} \mathbin{\coloneqq} 42\texttt{;}\ \mathtt{x} \mathbin{\coloneqq}
  2)$ and $t_{2} = (\mathtt{x} \mathbin{\coloneqq} 2)$, we have
  $t_{1} \approx t_{2}$ but
  $t_{1}\texttt{;}\ \mathtt{skip} \not\approx
  t_{2}\texttt{;}\ \mathtt{skip}$.
\end{example}
\noindent As indicated above, coolness guarantees congruence for
termination semantics:

%





\begin{theorem}
  \label{th:fa-congruence}
 Termination semantics is compositional for cool \isos{} specifications.
\end{theorem}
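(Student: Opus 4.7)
The strategy is to mimic the step-indexed approach from the proof of \autoref{th:sl-congruence}. For each $k \in \Nat$ define a relation $\approx_k$ on $\mu\Sigma$ by: $p \approx_k q$ iff for every $s \in S$ and every $s' \in S$, if $p$ terminates from $s$ in at most $k$ steps with final state $s'$ then $\ssbrack{q}(s) = s'$, and symmetrically with $p$ and $q$ swapped. A direct unfolding of the definitions yields $\approx = \bigcap_{k\in \Nat} \approx_k$. Hence it suffices to show, by induction on $k$, that each $\approx_k$ is preserved by every operator $\f \in \Sigma$; a side induction on term structure then promotes this to preservation by arbitrary $\Sigma$-contexts, and since $\approx$ is already an equivalence relation, this yields that $\approx$ is a congruence. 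The base case $k = 0$ is vacuous because no program in $\mu\Sigma$ terminates in zero steps.

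For the inductive step, fix an $n$-ary $\f \in \Sigma$ and assume $p_i \approx_{k+1} q_i$ for $i = 1,\ldots, n$ (hence $p_i \approx_k q_i$ as well). If $\f$ is passive, its first step from input $s$ is dictated entirely by $s$: $\f(p_1,\ldots,p_n)$ and $\f(q_1,\ldots,q_n)$ either both terminate at the same output state, or both transition in one step to configurations $(s', t(p_1,\ldots,p_n))$ and $(s', t(q_1,\ldots,q_n))$ for a common term $t$ and state $s'$ determined by $s$. In the latter case, termination of $\f(p_1,\ldots,p_n)$ in at most $k+1$ steps from $s$ with output $s^\sharp$ reduces to termination of $t(p_1,\ldots,p_n)$ in at most $k$ steps from $s'$ with $s^\sharp$, to which the inductive hypothesis applies.

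The decisive case is that of an active $\f$ with receiving position $j$. By coolness, the execution of $\f(p_1,\ldots, p_n)$ from $s$ is entirely driven by the execution of $p_j$: each progressing step of $p_j$ triggers a single patience step of $\f(p_1,\ldots, p_n)$, which advances $p_j$ and forwards the state unchanged; and the moment $p_j$ terminates at some state $\hat s$, a single further step of $\f(p_1,\ldots,p_n)$ either terminates at a state $s''$ (if the relevant rule has $t=\ast$) or transitions to a configuration $(s'', t_{\hat s}(p_1,\ldots,p_n))$, where $s''$ and $t_{\hat s}$ depend only on $\hat s$ and $t_{\hat s}$ has its variables in $\{x_1,\ldots, x_n\}\smin\{x_j\}$. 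Hence, if $\f(p_1,\ldots, p_n)$ terminates in at most $k+1$ steps from $s$ at $s^\sharp$, then $p_j$ terminates from $s$ in some $m\le k+1$ steps reaching some $\hat s$, and either (a) the discard step is terminating with $s^\sharp = s''$, or (b) $t_{\hat s}(p_1,\ldots,p_n)$ terminates in at most $k+1-m \le k$ steps from $s''$ with $s^\sharp$. Now $p_j \approx_{k+1} q_j$ yields $\ssbrack{q_j}(s) = \hat s$, so $\f(q_1,\ldots,q_n)$ eventually reaches the analogous terminal state or continuation; in case (b), the inductive hypothesis applied to the term $t_{\hat s}$ (whose variables exclude $x_j$) together with $p_i \approx_k q_i$ for $i \neq j$ gives $t_{\hat s}(p_1,\ldots,p_n)\approx_k t_{\hat s}(q_1,\ldots,q_n)$, so the continuation eventually terminates from $s''$ with $s^\sharp$. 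In either case $\ssbrack{\f(q_1,\ldots,q_n)}(s) = s^\sharp$; the symmetric direction is analogous.

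The main obstacle is the step-index bookkeeping in the active case: replacing $p_j$ with an $\approx$-equivalent $q_j$ may drastically change the number of steps needed for termination, yet the enclosing $\f$ must behave consistently. Coolness is precisely what makes this work, as it forces $\f$ to wait for $p_j$'s termination in lock-step via patience rules and to depend on the outcome only through $p_j$'s final state; this ensures that the step budget available for any continuation subterm drops by at least one, allowing the application of the inductive hypothesis.
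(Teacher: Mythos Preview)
Your argument is correct. The reduction $\approx=\bigcap_k\approx_k$ holds, each $\approx_k$ is reflexive and symmetric (transitivity is not needed), and the step bookkeeping in the active case works out: if $\f(\vec p)$ terminates in at most $k{+}1$ steps from $s$, then $p_j$ terminates from $s$ in some $m\le k{+}1$ steps (so $p_j\approx_{k+1}q_j$ applies), and any continuation $t_{\hat s}(\vec p)$ terminates in at most $k{+}1{-}m\le k$ steps (so the level-$k$ hypothesis, lifted to the context $t_{\hat s}$, applies). One presentational point: you invoke ``the inductive hypothesis applied to the term $t_{\hat s}$'' (and similarly $t$ in the passive case), but your outer hypothesis only gives preservation of $\approx_k$ under \emph{single} operators. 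You need the structural side induction at level $k$ already \emph{inside} the step to level $k{+}1$, not as a postprocessing step; your wording suggests the latter. State the invariant as ``$\approx_k$ is preserved by all $\Sigma$-contexts'' and the structure becomes clean.

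The paper organizes the same core insight differently. Rather than step-indexing the equivalence, it keeps the full relation $\approx$ in the hypotheses (packaged as $C[\approx]$ on complex terms in $\Sigma^\star(\mu\Sigma)$) and runs a lexicographic double induction: outer on the number of steps $k$ until $\hat\iota(p)$ terminates, inner on the structure of $p$ as a complex term. The inner structural induction is what lets the paper conclude that $\hat\iota(q_j)$ terminates at the right state even when $\hat\iota(p_j)$ uses all $k$ steps (where the outer hypothesis alone would not apply). Your step-indexed hypothesis $p_j\approx_{k+1}q_j$ sidesteps this edge case directly, at the cost of having to interleave the context lift into the main induction. Both routes hinge on exactly the same use of coolness: patience rules keep $\f$ in lock-step with $p_j$, and the discard rule's dependence only on the terminal state $\hat s$ makes the continuations of the $p$- and $q$-sides match regardless of how long $q_j$ took.
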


\begin{proof}[Proof sketch]
Suppose that $\f\in \Sigma$ is an $n$-ary operator and $p_m,q_m\in \mu \Sigma$ are programs with $p_m\approx q_m$ for $m=1,\ldots,n$. By symmetry, it suffices to show the following for all $s,\ol{s}\in S$:
\[
    \text{If $s,\f(p_1,\ldots,p_m)$ terminates in state
      $\ol{s}$, then $s,\f(q_1,\ldots,q_m)$ terminates in state
      $\ol{s}$.}
\]
The proof proceeds by an outer induction on the number of steps until
termination of $s,\f(p_1,\ldots,p_m)$ and an inner induction on the
structure of the programs.
\end{proof}

\noindent By \autoref{cor:str-cong} we know that for every cool (whence
streamlined) specification the kernel of
$\trc\colon \nu T\epito (\nu B)^S$ forms a congruence. Since $\trc$ is
surjective, this means precisely that there is a (unique) $\Sigma$-algebra
structure on $(\nu B)^S$ for which $\trc$ is a $\Sigma$-algebra
homomorphism.
\begin{corollary}\label{cor:fns-cong}
  For every cool \isos{} specification, the kernel of
  $\fn^S\colon (\nu B)^S\epito (S+1)^S$ is a congruence w.r.t.~the
  induced $\Sigma$-algebra structure on $(\nu B)^S$.
\end{corollary}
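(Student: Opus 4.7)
My plan is to reduce \autoref{cor:fns-cong} to the existence of a $\Sigma$-algebra structure on $(S+1)^S$ that makes $\fn^S$ a $\Sigma$-algebra homomorphism. First, observe that $\fn$ (and hence $\fn^S$) is surjective: any $s\in S$ is realized by the finite stream $s$, and $\bot$ by any infinite stream. By \autoref{cor:str-cong}, applicable since every cool specification is streamlined, the kernel of $\trc$ is a $\Sigma$-congruence on $\nu T$; equivalently, there is a $\Sigma$-algebra structure $\alpha_B$ on $(\nu B)^S$ for which $\trc$ is a surjective $\Sigma$-algebra homomorphism. By surjectivity of $\trc$, showing that $\ker(\fn^S)$ is a $\Sigma$-congruence on $(\nu B)^S$ is equivalent to showing that $\ker(\fn^S\comp\trc)$ is a $\Sigma$-congruence on $\nu T$, which in turn is equivalent to the existence of an (automatically unique) $\Sigma$-algebra structure on $(S+1)^S$ making both $\fn^S$ and $\fn^S\comp\trc$ $\Sigma$-algebra homomorphisms.

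For the main argument, I would exploit the cool format to derive a recursive characterization of $\fn^S(\alpha_B(\f,\vec f))(s)$ that depends on $\vec f$ only through $\fn^S(\vec f)$. For an active operator $\f$ with receiving position $j$, the patience rules force $\alpha_B(\f,\vec f)$ to mirror $f_j$'s trace until $f_j$ terminates, and the cool format then hands off to $\alpha_B(\sem{t(s')},\vec f\smin f_j)(s''(s'))$ with residual term $t(s')$ and continuation state $s''(s')$ depending only on the terminating state $s'$ of $f_j$. Consequently, $\fn^S(\alpha_B(\f,\vec f))(s)=\bot$ whenever $\fn^S(f_j)(s)=\bot$, while $\fn^S(\alpha_B(\f,\vec f))(s)=\fn^S(\alpha_B(\sem{t(s')},\vec f\smin f_j))(s''(s'))$ whenever $\fn^S(f_j)(s)=s'\in S$. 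For passive $\f$, the first step is fixed by the unique rule at $(\f,s)$, and the remainder either yields a terminating state directly or reduces to $\alpha_B(\sem{t},\vec f)$ for some fixed $t$. Thus, given $\fn^S(f_i)=\fn^S(g_i)$ for all $i$, the two sides $\fn^S(\alpha_B(\f,\vec f))(s)$ and $\fn^S(\alpha_B(\f,\vec g))(s)$ coincide in every case.

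The hard part is the well-foundedness of this recursive characterization: the residual $\alpha_B(\sem{t(s')},\vec f\smin f_j)$ is \emph{not} structurally smaller than $\alpha_B(\f,\vec f)$, and unfolding may loop (think of \texttt{while} constructs). I would mirror the outer induction used in the proof sketch of \autoref{th:fa-congruence} and induct on the number of computation steps to termination in the trace $\sbrack{\alpha_B(\f,\vec f)}(s)$: if the trace is infinite, then $\fn^S(\alpha_B(\f,\vec f))(s)=\bot$ and the cool-format analysis forces $\fn^S(\alpha_B(\f,\vec g))(s)=\bot$ as well; if it is finite, the number of remaining steps strictly decreases under each patience unfolding and under each cool hand-off, so the induction hypothesis applies to the residual. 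A secondary induction on the structure of the (finite) target terms $t(s')$ handles the \emph{inner} recursion through compound $\Sigma^\star$-operators, completing the argument and yielding $\fn^S(\alpha_B(\f,\vec f))=\fn^S(\alpha_B(\f,\vec g))$, whence the congruence.
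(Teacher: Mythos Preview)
Your approach is correct in spirit but takes a genuinely different route from the paper's. The paper's proof is a one-liner: it is \emph{completely analogous to the proof of \autoref{cor:str-cong}}, i.e.\ one extends the specification~$\L$ by a constant $c_t$ for every $t\in\nu T$ (with the evident rules making $\brack{c_t}=t$), observes that adding constants preserves coolness since constants are passive, and then applies \autoref{th:fa-congruence} to the extended specification~$\L'$. Surjectivity of $\brack{-}_{\L'}$ (hence of $\sbrack{-}_{\L'}$) then lets one transfer the congruence property of $\approx_{\L'}$ on $\mu\Sigma'$ to the kernel of $\fn^S$ on $(\nu B)^S$, exactly as in step~1 of the proof of \autoref{cor:str-cong}.

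Your proposal, by contrast, never invokes \autoref{th:fa-congruence} as a black box; instead you re-run its induction (outer on steps-to-termination, inner on term structure) directly at the semantic level, reasoning about $\alpha_B$ on $(\nu B)^S$. This is workable---the recursive description of $\alpha_B(\f,\vec f)(s)$ you sketch does follow from the cool format together with the defining property $\alpha_B\comp\Sigma\trc=\trc\comp\alpha$---but it duplicates the effort already spent on \autoref{th:fa-congruence}. The paper's constant-extension trick buys a much shorter argument by reducing the semantic-level congruence to the syntactic-level one already established. Your approach would pay off only if one wanted a self-contained proof that avoids enlarging the signature, or a more abstract argument that bypasses programs altogether. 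Two minor points: your reduction in the first paragraph (to $\ker(\fn^S\comp\trc)$ being a congruence on $\nu T$) is correct but you then do not use it; and the notation $\sbrack{\alpha_B(\f,\vec f)}(s)$ is a slip---$\alpha_B(\f,\vec f)$ already lives in $(\nu B)^S$, so you mean simply $\alpha_B(\f,\vec f)(s)$.
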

%
%
%
%

\section{Conclusions and Future Work}

We have introduced the \emph{\isos{}} rule format for the operational
semantics of stateful languages, and equipped it with three semantics:
resumption semantics, trace semantics, and termination semantics, in
decreasing order of granularity. Our main interest has been in
compositionality of these semantics. While resumption semantics is
always compositional, it is in general undecidable whether the coarser
semantics are compositional. However, compositionality is ensured by
restricting to \emph{streamlined} \isos{} specifications for trace
semantics, and to \emph{cool} \isos{} specifications for termination
semantics. The compositionality result for the cool format improves on
previous results for the similar \emph{evaluation-in-context}
formats~\cite{DBLP:conf/fossacs/Abou-SalehP13} by abstracting from
steps until termination. The streamlined format is more permissive, as
we illustrate on a signal handling construct.

Our results currently work with deterministic state transformers,
captured by the functor $TX = (BX)^S$ where $BX = \stimes{(X+1)}$. We
believe that our results generalize to functors~$B$
equipped with a natural transformation $c_X\colon BX \to S$. As a first step, this generalization requires an abstract characterization of our streamlined and cool rule formats in terms of their corresponding natural transformations, along with categorical proofs of the respective congruence theorems. We leave this as an important point for future work.

A further direction of possible generalization is to cover
\emph{effects}, such as non-determinism, in a similar style as in work
on evaluation-in-context~\cite{DBLP:conf/fossacs/Abou-SalehP13}. Our
work embeds the standard semantics of sequential imperative
programming (in particular termination semantics) into the paradigm of
operational semantics via distributive laws, and we expect to relate
our results to work on morphisms of distributive
laws~\cite{DBLP:journals/entcs/Watanabe02,DBLP:conf/calco/KlinN15},
which, for instance, have recently been shown to have applications to
secure compilation~\cite{DBLP:conf/cmcs/0001NDP20}. Extending the
overall paradigm to support higher-order languages is a well-known
and, so far, elusive problem. Like in the current work, tackling this
problem may require a slight deviation from the standard form of GSOS
laws. It is worth noting that rule formats for higher-order languages
have been proposed in the past by
Howe~\cite{DBLP:journals/iandc/Howe96},
Bernstein~\cite{DBLP:conf/lics/Bernstein98} and more recently
Hirschowitz and Lafont~\cite{DBLP:journals/corr/abs-2103-16833}.

Our treatment of resumption and trace semantics and their relationship
is generic, and presumably can be transferred to other settings, in
particular to constructive and type-theoretic frameworks. Indeed we
expect that it can be implemented relatively directly in
foundational proof assistants such as Agda, without additional
postulates (such as the axiom of choice or the law of excluded
middle). In contrast, the domain $(S+1)^S$ of termination semantics is
inherently classical, as it postulates that every computation will
either terminate or diverge. This can be remedied by replacing the
maybe-monad $(-)+1$ with a suitable \emph{partiality
  monad}~\cite{AltenkirchDanielssonEtAl17,ChapmanUustaluEtAl19}. We
will explore to what extent our results regarding
termination semantics can be rebased on this more general perspective.







\bibliography{mainBiblio}

\clearpage
\appendix
\section{Appendix: Omitted Proofs}

\subsection*{Proof of~\autoref{thm:specifications-vs-transformations}}

Let $B=\stimes{(\Id+1)}$. The correspondence between~(2) and~(3) is proven by the following bijections:
\begin{align*}
  & \NT(\stimes{\Sigma(\Id\times B)},B\Sigmas)\\
  & \iso \;\NT\Bigl(\sum\nolimits_{\f\in\Sigma} \stimes{(\Id\times B)^{\ar(\f)}},B\Sigmas\Bigr)\\
  &\iso\;\prod\nolimits_{\f\in\Sigma}\NT(\stimes{(\Id\times B)^{\ar(\f)}},B\Sigmas)\\
  &\iso\;\prod\nolimits_{\f\in\Sigma}\NT(\stimes{(\Id\times \stimes{(\Id+\term)})^{\ar(\f)}},B\Sigmas)\\
  &\iso\;\prod\nolimits_{\f\in\Sigma}\NT(\stimes{S^{\ar(\f)}\times \Id^{\ar(\f)}\times  (\Id+\term)^{\ar(\f)}},B\Sigmas)\\
  &\iso\;\prod\nolimits_{\f\in\Sigma}\NT(\Id^{\ar(\f)}\times (\Id+\term)^{\ar(\f)},(B\Sigmas)^{ \stimes{S^{\ar(\f)}}})\\
  &\iso\;\prod\nolimits_{\f\in\Sigma}\NT\Bigl(\Id^{\ar(\f)}\times \sum\nolimits_{W\subseteq\ar(\f)} \Id^{W},(B\Sigmas)^{ \stimes{S^{\ar(\f)}}}\Bigr)\\
  &\iso\;\prod\nolimits_{\f\in\Sigma}\prod\nolimits_{W\subseteq\ar(\f)}\NT(\Id^{\ar(\f)}\times \Id^{W},(B\Sigmas)^{ \stimes{S^{\ar(\f)}}})\\
  &\iso\;\prod\nolimits_{\f\in\Sigma}\prod\nolimits_{W\subseteq\ar(\f)}\NT(\Id^{\ar(\f)+W},(B\Sigmas)^{\stimes{S^{\ar(\f)}}})\\*
  &\iso\;\prod\nolimits_{\f\in\Sigma}\prod\nolimits_{W\subseteq\ar(\f)} (B\Sigmas(\ar(\f)+W))^{\stimes{S^{\ar(\f)}}}
\end{align*}
The last step uses the Yoneda lemma, and all other steps follow from the
definition of (co-)products and the fact that products distribute over
coproducts in $\Set$.

To prove the correspondence between~(1) and~(2) we first observe that
every \isos{} specification $\L$ induces a family
\[
  \big(r_{\f,W}\c \stimes{S^{\ar(\f)}}\to \stimes{\Sigmas(\ar(\f)+W)}+
  S\big)_{\f\in\Sigma,W\subseteq \ar(\f)}
\]
as follows: for every $n$-ary $\f\in \Sigma$,
$s,s_1',\ldots,s_n'\in S$ and $W\seq \{1,\ldots,n\}$, if $\L$ contains
the rule
\[
  \inference{(s,x_j\to s_j',y_j)_{j\in W}\qquad(\rets{x_j}{s_j'})_{j\in\{1,\ldots,n\}\smin W}}{s,\f(x_1,\ldots,x_{n})\to s',t}, 
\]
then $r_{\f,W}(\cpair{s}{s_1',\ldots,s_n'})=(s',t')$ where
$t'\in \Sigmas(\ar(\f)+W)$ is obtained from $t$ by substituting the
variable $x_j$ by $j$ (in the left coproduct component of $\ar(\f)+W$) for each $j\in \ar(\f)$, and the variable $y_j$ by $j$ (in the right coproduct component of $\ar(\f)+W$) for each $j\in
W$. If $\L$ contains the rule
\[
  \inference{(s,x_j\to s_j',y_j)_{j\in W}\qquad(\rets{x_j}{s_j'})_{j\in\{1,\ldots,n\}\smin W}}{\rets{s,\f(x_1,\ldots,x_{n})}{s'}}. 
\]
then $r_{\f,W}(\cpair{s}{s_1',\ldots,s_n'})=s'$. This translation
clearly defines a bijective correspondence between \isos{}
specifications $\L$ and families $(r_{\f,W})_{\f,W}$.

The construction $\L\mapsto \delta$ in
\autoref{rem:spec-vs-trafo}.\ref{rem:spec-to-trafo} is obtained by
first forming the family $(r_{\f,W})_{\f,W}$ corresponding to $\L$ and
then taking the \isos{} law $\delta$ corresponding to
$(r_{\f,W})_{\f,W}$ according to the above isomorphism.

Similarly, the construction $\delta\mapsto \L$ in
\autoref{rem:spec-vs-trafo}.\ref{rem:trafo-to-spec} is obtained by
first forming the family $(r_{\f,W})_{\f,W}$ corresponding to $\delta$
according to the above isomorphism, and then taking the \isos{}
specification $\L$ corresponding to $(r_{\f,W})_{\f,W}$.

Consequently, the two constructions in \autoref{rem:spec-vs-trafo} are
mutually inverse.\qed

\subsection*{Proof of \autoref{P:curry-h}}

\begin{rem}
  The proof of \autoref{P:curry-h} makes use of an induction principle
  which combines two induction principles known in the literature:
  \emph{primitive recursion} (see
  e.g.~\cite[Prop.~2.4.7]{DBLP:books/cu/J2016}) and \emph{induction
    with parameters} (see
  e.g.~\cite[Exercise.~2.5.5]{DBLP:books/cu/J2016}. We recall
  primitive recursion in \autoref{R:princ-1} below and use it in
  \autoref{R:princ-2} to establish the combined principle we need.
  \begin{enumerate}
  \item\label{R:princ-1} Let $F$ be a functor with an initial algebra $\mu F$ on a category
    with finite products. Primitive recursion states that for every
    morphism $\alpha\colon F(\mu F \times A) \to A$
    there exists a unique morphism
    $h\colon \mu F \to A$ such that the following square commutes:
    \[
    \begin{tikzcd}
      F(\mu F)
      \ar{r}{\ini}
      \ar{d}[swap]{F\brks{\id, h}}
      &
      \mu F
      \ar{d}{h}
      \\
      F(\mu F \times A)
      \ar{r}{\alpha}
      &
      A
    \end{tikzcd}
  \]

\item\label{R:princ-2} Now assume in addition that $F$ is strong and that the base
  category is cartesian closed. Then for every object $Y$ and every
  morphism $\beta\colon \ytimes{F(\mu F \times A)} \to A$ there exists
  a unique morphism $h\colon \ytimes{\mu F} \to A$ such that the
  following diagram commutes:
    \begin{equation}\label{eq:strong-prim}
      \begin{tikzcd}
        \ytimes{F(\mu F)}
        \ar{r}{\tpair{\id}{\ini}}
        \ar{d}[swap]{\sbrks{\strength}}
        &
        \ytimes{\mu F}
        \ar{dd}{h}
        \\
        \ytimes{F(\ytimes{\mu F})}
        \ar{d}[swap]{\tpair{\id}{F\brks{\snd,h}}}
        \\
        \ytimes{F(\mu F \times A)}
        \ar{r}{\beta}
        &
        A
      \end{tikzcd}
    \end{equation}
    Indeed, given $\beta$ one obtains a morphism $\alpha\colon F(\mu F
    \times A^Y) \to A^Y$ by currying the morphism
    \begin{equation}
      \begin{aligned}
        \ytimes{F(\mu F \product A^{Y})}
        &
        \xra{\hspace*{3.7mm}\sbrks{\strength}\hspace*{3.7mm}}
        \ytimes{F(\ytimes{\mu F \product A^{Y}})} \cong \ytimes{F({\mu F\product Y \product A^{Y}})}
        \\
        &
        \xra{\tpair{\id}{F(\id \product \ev )}}
        \ytimes{F (\mu F \product A)}
        \xra{\beta}
        A.
      \end{aligned}
    \end{equation}
    By primitive recursion there exists a unique morphism
    $h\colon \ytimes{\mu F} \to A$ such that the square below
    commutes:
    \begin{equation}
      \label{eq:str1}
      \begin{tikzcd}[column sep = 30]
        F (\mu F)
        \ar{r}{\ini}
        \ar{d}[swap]{F\langle \id, \curry(h) \rangle}
        &
        \mu F
        \ar{d}{\curry(h)}
        \\
        F(\mu F \product A^{Y})
        \arrow{r}{\alpha}
        &
        A^{Y}
      \end{tikzcd}
    \end{equation}
    Now consider the diagram below:
    \begin{equation}
      \label{eq:str2}
      \begin{tikzcd}[column sep = 60]
        \ytimes{F (\mu F)}
        \ar{rr}{\tpair{\id}{\ini}}
        \ar{rd}[inner sep = 1]{\tpair{\id}{F\langle \id,\curry(h) \rangle}}
        \ar{d}[swap]{\sbrks{\strength}}
        &
        &
        \ytimes{\mu F}
        \ar{ddd}{h}
        \\
        \ytimes{F(\ytimes{\mu F})}
        \ar{rd}[swap,near end]{\tpair{\id}{F(\brks{\id,\curry(h)})}}
        \ar{dd}[swap]{\tpair{\id}{F\brks{\snd,h}}}
        &
        \ytimes{F(\mu F \product A^{Y})}
        \ar{d}{\sbrks{\strength}}
        \\
        &
        \ytimes{F(\ytimes{\mu F \product A^{Y}})}
        \ar{ld}{\tpair{\id}{F(\id \product \ev )}}
        \\
        \ytimes{F (\mu F \product A)}
        \arrow{rr}{\beta}
        &&
        A
      \end{tikzcd}
    \end{equation}
    Its outside is~\eqref{eq:strong-prim}, the right-hand part is
    precisely the uncurrying of~\eqref{eq:str1}, and the remaining two
    inner parts clearly commute. Hence, the outside commutes iff so
    does the right-hand part which happens iff~\eqref{eq:str1}
    commutes. This proves the desired result.
  \end{enumerate}
\end{rem}
\begin{proof}[Proof of \autoref{P:curry-h}]
  In \autoref{def:transition}, the map
  $\gamma_0\colon \stimes{\mu\Sigma} \to \stimes{(\mu\Sigma +1)}$ is defined
  such that the following diagram commutes:
  \[
    \begin{tikzcd}[column sep = 45]
      \stimes{\Sigma (\mu \Sigma)}
      \ar{rr}{\tpair{\id}{\ini}}
      \ar{d}[swap]{\sbrks{\strength}}
      & &
      \stimes{\mu \Sigma}
      \ar{dd}{\gamma_0}
      \\
      \stimes{\Sigma(\stimes{\mu \Sigma})}
      \ar{d}[swap]{\tpair{\id}{\Sigma\brks{\snd,\gamma_0}}}
      \\
      \stimes{\Sigma (\mu \Sigma \product \stimes{(\mu\Sigma \coproduct \term)})}
      \ar{r}{\delta_{\mu \Sigma}}
      & \stimes{(\Sigmas(\mu\Sigma) \coproduct \term)}
      \ar{r}{\tpair{\id}{(\hat\ini \coproduct \id)}}
      & \stimes{(\mu\Sigma \coproduct \term)}
    \end{tikzcd}
  \]
This is an instance of~\eqref{eq:strong-prim} with $h=\gamma_0$ and $\beta = (\tpair{\id}{(\hat\ini+\id)}) \comp \delta_{\mu\Sigma}$.
  The map $\alpha$ in~\eqref{eq:str1} is the
  $\mu\Sigma$-component of the natural transformation
  in~\eqref{eq:embed} composed with $\tpair{\id}{(\hat\ini + \id)}$. Thus currying this yields
  $T\hat\ini \cdot \hat\delta_{\mu\Sigma}$
  (cf.~\eqref{eq:gamma}). This implies the desired result.
\end{proof}

\subsection*{Proof of \autoref{prop:trc-surjective}}
  \lsnote{This part I'm a bit worried about. When one understands
    what $\nu T$ and $\nu B$ are, then the fact is just blatantly
    obvious, and in no way justifies such a complicated proof. (Here
    is a short one: Given an element $f$ of $\nu B^S$, build a
    tree~$t$ in $\nu T$ that has, for each~$s\in S$, a tree $t_s$ as
    the $s$-th child of the root, where $t_s$ is a tree that has
    uniform depth being the length of the sequence $f(s)$, and labels
    all edges at level~$i<\omega$ with the $i$-th entry of
    $f(s)$. Then $\trc(t)=f$.)
    SM: I do not immediately understand this argument (e.g.~uniform depth is
    not defined), but I find the categorical proof completely lucid. It
    also substantiates the claim in our conclusion that our results
    hold for more general functors $B$. So it's good to have the
    categorical proof. Reader for whom this is clear may simply skip
    reading the proof.} The object $(\nu B)^S$ can be equipped
  with the $T$-coalgebra structure
  \begin{align*}
    (\nu B)^S\xto{\beta^S} (B(\nu B))^S = T(\nu B) \xto{T\Delta} T((\nu B)^S),
  \end{align*}
  where $\beta$ is the structure of the final coalgebra $\nu B$ and
  $\Delta=\brks{\id}_{s\in S}$ denotes the diagonal. We claim that the
  unique $T$-coalgebra homomorphism
  \[
    m = (T\Delta\comp \beta^S)^{\sharp}\colon (\nu B)^S\to \nu T
  \]
  is a splitting of $\trc$ (that is, $\trc\comp m = \id$), which
  implies that $\trc$ is surjective. To prove the claim, we consider the
  following $B$-coalgebra structure on $\stimes{(\nu B)^S}$:
  \[
    \stimes{(\nu B)^S}
    \xto{\ev}
    \nu B
    \xto{\beta}
    B(\nu B)
    \xto{\sbrks{B\Delta}}
    \stimes{B ((\nu B)^S )}
    \xto{\strength}
    B(\stimes{(\nu B)^S}).
  \]
  We will show below that both
  $\ev\colon \stimes{(\nu B)^S} \to \nu B$ and
  $\bar{\tau}^{\sharp} \comp (\stimes{m})$ are
  homomorphisms from this coalgebra to $\nu B$. Then
  $\bar{\tau}^{\sharp} \comp (\stimes{m}) = \ev$ by finality, whence
  the desired result holds as follows:
  \[
    \trc\comp m = \curry(\bar{\tau}^{\sharp})\comp m = \curry(\ev) =
    \id.
  \]
  The commutative diagram below proves $\ev$ to be a coalgebra homomorphism:
  \[
    \begin{tikzcd}
      \stimes{(\nu B)^S}
      \ar{r}{\ev}
      \ar{d}[swap]{\ev}
      &
      \nu B
      \ar{r}{\beta}
      &
      B(\nu B)
      \ar{drr}{\id}
      \ar{r}[rotate=20,pos=1,inner sep = 7]{ \sbrks{B\Delta} }
      &
      \stimes{B( (\nu B)^S )}
      \ar{r}{\strength}
      &
      B ( \stimes{(\nu B)^S} )
      \ar{d}{B\ev}
      \\
      \nu B \ar{rrrr}{\beta}
      &&&&
      B(\nu B)
    \end{tikzcd}
  \]
  To see that the right-hand triangle commutes, let $(\cpair{s}{w})\in \stimes{(\nu B+1)} = B(\nu B)$. If
  $w\in \nu B$, then we have
  \[
    (\cpair{s}{w})
    \xmapsto{\sbrks{B\Delta}}
    (\cpair{s}{(\cpair{s}{c_w})})
    \xmapsto{\strength}
    (\cpair{s}{(\cpair{s}{c_w})})
    \xmapsto{B\ev}
    (\cpair{s}{w})
  \]
  where $c_w\in (\nu B)^S$ is the constant map with value $w$. If
  $w=*$, we have
  \[
    (\cpair{s}{w})=(\cpair{s}{*})
    \xmapsto{\sbrks{B\Delta}}
    (\cpair{s}{(\cpair{s}{*})})
    \xmapsto{\strength}
    (\cpair{s}{*})
    \xmapsto{B\ev}
    (\cpair{s}{*})=(\cpair{s}{w}).
  \]
  Similarly, the diagram below shows that
  $\bar{\tau}^{\sharp}\comp (\stimes{m})$ is a coalgebra homomorphism:
  \[
    \begin{tikzcd}[column sep = 15]
      \stimes{(\nu B)^S}
      \ar{r}{\ev}
      \ar{ddd}[swap]{\stimes{m}}
      \ar{dr}[inner sep=0]{\stimes{\beta^S}}
      &
      \nu B
      \ar{r}{\beta}
      &
      B(\nu B)
      \ar{r}{\sbrks{B\Delta}}
      \ar{rdd}[swap,inner sep=1]{\sbrks{B(m\cdot \Delta)}}
      &[6ex]
      \stimes{B((\nu B)^S)}
      \ar{dd}{\stimes{Bm}}
      \ar{r}{\strength}
      &
      B ( \stimes{(\nu B)^S} )
      \ar{ddd}{B(\stimes{m})}
      \\
      &
      \stimes{(B(\nu B))^S}
      \ar{ru}{\ev}
      \ar{d}[swap]{\stimes{T\Delta}}
      \ar[bend left=40,xshift=20]{dd}{\stimes{T(m\cdot\Delta)}}
      \\
      &
      \stimes{T((\nu B)^S)}
      \ar{d}[swap]{\stimes{Tm}}
      &&
      \stimes{B(\nu T)}
      \ar{rd}{\strength}
      \\
      \stimes{\nu T}
      \ar{r}{\stimes{\tau}}
      \ar{d}[swap]{\bar{\tau}^{\sharp}}
      \ar[yshift=-2pt,bend right = 6]{rrrr}{\ol{\tau}}
      &
      \stimes{T(\nu T)}
      \ar{rr}{\ev}
      &&
      B(\nu T)
      \ar{u}{\sbrks{\id} }
      \ar{r}{\sbrks{\strength}}
      &
      B(\stimes{\nu T})
      \ar{d}{B(\bar{\tau}^{\sharp})}
      \\[1ex]
      \nu B \ar{rrrr}{\beta}
      & & & &
      B(\nu B)
    \end{tikzcd}
  \]
  Indeed, we show that all its inner parts commutes, whence so does
  the outside as desired. The lowest part commutes since
  $\ol\tau^\sharp$ is a coalgebra homomorphism, and the small part
  above it is the definition of $\ol\tau$. The upper right-hand part
  commutes by the naturality of
  $\strength\colon \stimes{BX} \to B(\stimes{X})$, and the triangles
  below it and on its left obviously do. The upper left-hand triangle
  commutes by the naturality of $\ev$, and the left-hand part does
  since $m\colon (\nu B)^S \to \nu T$ is a $T$-coalgebra
  homomorphism. For the remaining middle part we consider the
  components of the product $\stimes{B(\nu T)}$ separately. The
  right-hand component commutes due to the naturality of
  $\ev\colon \stimes{TX} = \stimes{(BX)^S} \to BX$, and for the
  left-hand component we expand the definition of $T$ and obtain the
  following commutative diagram:
  \[
    \begin{tikzcd}
      \stimes{(B(\nu B))^S}
      \ar{rr}{\ev}
      \ar{dd}[swap]{\tpair{\id}{(B(m\cdot\Delta))^S}}
      \ar{rd}[inner sep = 1]{\stimes{(\fst)^S}}
      &&
      B(\nu B) = \mathrlap{\stimes{(\nu B + 1)}}
      \ar{d}{\fst}
      \\
      &
      \stimes{S^S}
      \ar{r}{\ev}
      &
      S
      \\
      \stimes{(B(\nu T))^S}
      \ar{rr}{\ev}
      \ar{ru}[swap,inner sep=0]{\stimes{(\fst)^S}}
      &&
      B(\nu B) = \mathrlap{\stimes{(\nu B + 1)}}
      \ar{u}[swap]{\fst}
    \end{tikzcd}
  \]
  The two right-hand parts obviously commute, and the left-hand
  triangle also clearly does: remove $\stimes{(\argument)^S}$ and use
  that $B(m \cdot \Delta) = \stimes{(m \cdot \Delta + 1)}$.
  This completes the proof.\qed

\subsection*{Proof of \autoref{thm:compositionality-undecidable}}
We split \autoref{thm:compositionality-undecidable} into two separate results:

\begin{theorem}\label{thm:compositionality-undecidable1}
  It is undecidable whether the trace semantics induced by a primitive
  recursive \isos{} specification is compositional.
\end{theorem}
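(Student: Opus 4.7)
The plan is to reduce the halting problem for Turing machines to the compositionality problem for trace semantics: given a Turing machine $M$ and input word $w$, I would construct a primitive recursive \isos{} specification $\mathcal{L}_{M,w}$ such that trace semantics for $\mathcal{L}_{M,w}$ is compositional if and only if $M$ does not halt on $w$. Since the map $(M,w) \mapsto \mathcal{L}_{M,w}$ is computable, the undecidability of halting then transfers.

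To build $\mathcal{L}_{M,w}$, I would enlarge the state space of \while{} to $S = \mathrm{Stores} \times \Nat$, interpreting the extra component as a numerical counter, and extend the signature by a single unary operator $\oname{b}$ whose rules depend on the input state $s = (\sigma,n)$ as follows. If $M$ halts on $w$ within $n$ steps, $\oname{b}$'s progressing rule is the state-clearing one $\goes{s,x_1}{s',y_1}/\goes{s,\oname{b}(x_1)}{(\emptyset,0),\oname{b}(y_1)}$, mimicking the operator $\lfloor-\rfloor$ from \autoref{eq:blatant}; otherwise it is the transparent one $\goes{s,x_1}{s',y_1}/\goes{s,\oname{b}(x_1)}{s',\oname{b}(y_1)}$. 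The terminating rule is always taken to be transparent: $\rets{s,x_1}{s'}/\rets{s,\oname{b}(x_1)}{s'}$. For fixed $(M,w)$, the predicate ``$M$ halts on $w$ in at most $n$ steps'' is primitive recursive in $n$, so $\mathcal{L}_{M,w}$ is a primitive recursive specification.

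The analysis splits into two cases. If $M$ does not halt on $w$, the halting predicate is identically false, so $\oname{b}$ is transparent in every state and the whole specification fits the streamlined format of \autoref{def:sl} (with the sole position of $\oname{b}$ as its receiving position, using that \while{} itself is streamlined by \autoref{ex:while-streamlined}); by \autoref{th:sl-congruence}, trace semantics is compositional. If $M$ halts on $w$ in at most $N$ steps, I take $t_1,t_2$ as in \autoref{eq:blatant}. Then $t_1 \simeq t_2$ holds in $\mathcal{L}_{M,w}$, since their \while{} traces ignore the counter. However, starting from any state $(\sigma,N)$, the first step of $\oname{b}(t_i)$ fires the bad rule and clears the store to $(\emptyset,0)$; the subsequent terminating step then yields $(\{\mathtt{x}=1\},0)$ for $t_1$ but $(\emptyset,0)$ for $t_2$, exactly as in \autoref{eq:blatant}. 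Hence $\oname{b}(t_1) \not\simeq \oname{b}(t_2)$ and compositionality fails.

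The main obstacle is coupling the misbehaviour of $\oname{b}$ to the halting of $M$ while still ensuring compositionality in the non-halting case. The counter-based trigger achieves this: it translates the global question ``does $M$ halt on $w$?'' into local primitive recursive side conditions on individual rules, and it leaves the specification streamlined whenever $M$ diverges, so that \autoref{th:sl-congruence} delivers compositionality essentially for free.
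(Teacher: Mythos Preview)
Your argument is correct and takes a genuinely different route from the paper's.

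The paper encodes the Turing machine \emph{dynamically}: it builds a bespoke signature with constants $i,j,c_0,c,d_0,d$ and a unary $u$, and a state space containing the machine's configurations. The programs $i$ and~$j$ literally step through the run of~$M$; only after (and if) the simulation reaches the halting configuration do $c$ and $d$ start to behave in a way that $u$ can tell apart. In the non-halting case, the paper then performs an explicit classification of the full trace-equivalence relation on $\mu\Sigma$ (it turns out to be generated by the pairs $u^n(i)\simeq u^n(j)$) and checks by hand that this relation is a congruence.

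Your reduction instead encodes the machine \emph{statically}, via the primitive recursive bounded-halting predicate $T(M,w,n)$ used as a side condition on the rules for~$\oname{b}$, with the bound~$n$ read from a counter in the state. This is neat: the specification itself never simulates~$M$, and in the non-halting case every instance of the bad rule is vacuous, so the specification is literally streamlined and \autoref{th:sl-congruence} gives compositionality for free. In the halting case you reuse the concrete counterexample from \autoref{eq:blatant} verbatim.

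What each approach buys: yours is shorter and more modular, and it makes the moral of the section explicit---the only thing preventing compositionality is the possibility that a single non-streamlined rule fires, and undecidability already lurks in deciding whether that ever happens. The paper's approach is self-contained: it does not appeal to \autoref{th:sl-congruence}, which in the paper's ordering is proved only in \autoref{sec:cool}. If you keep your argument, you should flag this forward reference and note that there is no circularity (the proof of \autoref{th:sl-congruence} is an independent induction that nowhere uses \autoref{thm:compositionality-undecidable}). You should also spell out, as you implicitly assume, that the \while{} rules are lifted to $\mathrm{Stores}\times\Nat$ by leaving the counter untouched; this is what makes the extended \while{} fragment streamlined and what makes $t_1\simeq t_2$ carry over.
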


\begin{proof}
  We reduce the undecidable problem whether a given Turing machine $M$
  halts on the empty tape to the compositionality problem. Let $Q$ be
  the set of states and $A\cup \{{\sharp}\}$ the tape alphabet with ${\sharp}$
  representing empty cells. Recall that a \emph{configuration} of $M$
  is a pair $C=(q,v\underline{a}w)$ where $q\in Q$ is the current
  state, $vaw$ is the tape content and the underscore indicates the
  head position. We write $C\vdash C'$ if $C'$ is the successor
  configuration of $C$ according to the transitions of $M$. The
  initial configuration is $C_0=(q_0,\underline{{\sharp}})$, and we assume
  w.l.o.g.~that there is a unique halting configuration
  $C_{\mathrm{halt}}$.

From $M$ we construct a \isos{} specification $\L$ as follows. The signature is $\Sigma=\{i,j,c_0,c,d_0,d,u\}$ with $u$ unary and all other symbols constants, and the states are
\[ S=\mathrm{Conf}\cup \{0,1,\mathsf{err}\} \cup \Nat\times \{c_0,c,d_0, d\} \]
where $\mathrm{Conf}$ is the set of configurations of $M$. The rules are listed below; all cases where no explicit rule is specified lead to termination in the error state $\mathsf{err}$.
\begin{itemize}
\item Rules for $i$ and $j$:
\[
\frac{}{s,i\to C_0,c_0} \qquad \frac{}{s,j\to C_0,d_0}
\]
for all $s\in S$.
\item Rules for $c_0$ and $d_0$:
\[ \frac{}{C,c_0\to C',c_0} \qquad \frac{}{C_{\mathrm{halt}},c_0\to 0,c} \qquad \frac{}{(n,c_0),c_0\downarrow (n,c_0)}  \]
\[ \frac{}{C,d_0\to C',d_0} \qquad \frac{}{C_{\mathrm{halt}},d_0\to 0,d} \qquad \frac{}{(n,d_0),d_0\downarrow (n,d_0)}    \]
for all $C,C'\in\mathrm{Conf}$ with $C\vdash C'$ and $n\in \Nat$.
\item Rules for $c$ and $d$:
\[\frac{}{0,c\to 0,c} \qquad \frac{}{1,c\downarrow 1}\qquad \frac{}{(n,c),c\downarrow (n,c)}  \]
\[ \frac{}{0,d\to 0,d} \qquad \frac{}{1,d\downarrow 0} \qquad \frac{}{(n,d),d\downarrow (n,d)} \]
for all $n\in \Nat$.
\item Rules for $u$:
\[ \frac{s,p\to s',p'}{s,u(p) \to 1,p'}\qquad \frac{C,p\to s',p'}{C,u(p)\to s',u(p')} \qquad \frac{}{(n,k),u(p)\to (n+1,k),p}
\]
for all $s\in \{0,1\}$, $s'\in S$, $C\in \mathrm{Conf}$ and $(n,k)\in \Nat\times \{c_0,c,d_0,d\}$.
\end{itemize}
Clearly $\L$ has a primitive recursive representation derivable from the description of the given machine $M$. For the induced trace equivalence $\simeq\,=\,\simeq_\L$ we now show that
\[
  \text{$M$ halts on the empty tape}
  \qquad\text{iff}\qquad
  \text{$\simeq$ is not a congruence}.
\]
From the undecidability of the halting problem on the empty tape it
then follows that the congruence property of $\simeq$,
i.e.~compositionality of $\L$ under trace semantics, is undecidable.

\medskip\noindent ($\Longrightarrow$) Suppose that $M$ halts on the empty
tape, and let $C_0\vdash C_1\vdash\ldots\vdash C_n=C_{\mathrm{halt}}$ be the finite run
of $M$.  Then $i\simeq j$ but $u(i)\not\simeq u(j)$. Indeed, running
$i$ and $j$ on any input $s$ yields
\[
  s,i\to C_0, c_0\to C_1,c_0\to \cdots\to
  C_n=C_{\mathrm{halt}},c_0\to 0,c\to 0,c \to \cdots
\]
and
\[
  s,j\to C_0,d_0\to C_1,d_0\to \cdots\to  C_n=C_{\mathrm{halt}},d_0\to
  0,d\to 0,d\to\cdots,
\]
and so $i\simeq j$. On the other hand, running $u(i)$ and $u(j)$ on input $C_0$ yields
\[
  C_0,u(i)\to C_0, u(c_0)\to  C_1, u(c_0)\to \cdots\to
  C_n=C_{\mathrm{halt}}, u(c_0)\to  0,u(c)\to 1,c \downarrow 1
\]
and
\[
  C_0,u(j)\to C_0, u(d_0)\to  C_1, u(d_0)\to  \cdots\to
  C_n=C_{\mathrm{halt}},  u(d_0)\to  0, u(d)\to 1,d\downarrow 0,
\]
whence $u(i)\not\simeq u(j)$. This shows that $\simeq$ is not a congruence.

\medskip\noindent ($\Longleftarrow$) We argue by contraposition. Suppose that $M$
does not halt on the empty tape, and let $C_0\vdash C_1 \vdash C_2 \vdash \cdots$ be the
infinite run of $M$. We claim that $\simeq$ is the finest equivalence
relation containing the pairs
\[
  \qquad u^n(i)\simeq u^n(j)\qquad (n\in \Nat)
\]
(that is, $\simeq$ consists of these pairs, their converses, and all
diagonal pairs). This equivalence relation is clearly a
congruence. Thus let us prove the claim:
\begin{enumerate}
\item\label{T:cu:1} We have $i\simeq j$: running $i$ and $j$ on any input state $s$ yields
  \[
    s,i\to C_0, c_0\to C_1,c_0\to C_2,c_0\to \cdots
    \qquad\text{and}\qquad
    s,j\to C_0,d_0\to C_1,d_0\to C_2,d_0\to \cdots.
  \]
  Thus $i$ and $j$ are trace equivalent.

\item\label{T:cu:2} We have $u^n(i)\simeq u^n(j)$ for all $n\geq 1$: running
  $u^n(i)$ on all possible input states yields the following
  computations, where $s\in\{0,1\}$, $C\in \mathrm{Conf}$, and
  $(m,k)\in \Nat\times \{c_0,c,d_0,d\}$:
  \begin{itemize}
  \item $s, u^n(i)\to  1, c_0  \downarrow \mathsf{err}$
  \item $C, u^n(i)\to  C_0, u^n(c_0)\to  C_1, u^n(c_0)\to  C_2, u^n(c_0)\to  \cdots$
  \item $(m,k), u^n(i)\to  (m+1,k), u^{n-1}(i)\to  \cdots\to  (m+n,k), i\to  C_0, c_0\to  C_1, c_0\to  C_2, c_0\to  \cdots$
  \item $\mathsf{err}, u^n(i)\downarrow \mathsf{err}$
  \end{itemize}
  The computations for $u^n(j)$ are analogous, with $d_0$ in place
  of~$c_0$. Thus $u^n(i)$ and~$u^n(j)$ are trace equivalent.

\item We have $u^m(i)\not\simeq u^n(i)$ and $u^m(j)\not\simeq u^n(j)$
  for $m\neq n$. To see this, suppose w.l.o.g.~that $m<n$. Running $u^m(i)$
  and $u^n(i)$ on input state $(0,c_0)$ yields:
  \begin{itemize}
  \item $(0,c_0), u^m(i)\to  (1,c_0), u^{m-1}(i)\to  \cdots \to
    (m,c_0), i\to  C_0, c_0\to  C_1, c_0 \to \cdots$
  \item $(0,c_0), u^n(i)\to (1,c_0), u^{n-1}(i)\to \cdots \to (m,c_0),
    u^{n-m}(i) \to (m+1,c_0), u^{n-m-1}(i)\to \cdots$
  \end{itemize}
  Since the two computations lead to different states after $m+1$
  steps, we see that $u^m(i)\not\simeq u^n(i)$. The same argument with $d_0$ in place of $c_0$
  shows $u^m(j)\not\simeq u^n(j)$.

\item\label{T:cu:4} For each $n\in \Nat$ the program $u^n(c)$ is not
  $\simeq$-equivalent to any other program. In fact,~$u^n(c)$ is the
  unique program that on input state $(0,c)$ eventually
  terminates in the state $(n,c)$. Analogously, none of the programs
  $u^n(c_0)$, $u^n(d_0)$ and $u^n(d)$ is $\simeq$-equivalent to any
  other program. \qedhere
\end{enumerate}
\end{proof}

\begin{theorem}\label{thm:compositionality-undecidable2}
  It is undecidable whether the termination semantics induced by a
  primitive recursive \isos{} specification is compositional.
\end{theorem}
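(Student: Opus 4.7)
The plan is to adapt the reduction from the halting problem used in the proof of \autoref{thm:compositionality-undecidable1}, reusing verbatim the primitive recursive \isos{} specification $\L$ constructed from a Turing machine $M$. I aim to show that $M$ halts on the empty tape iff termination equivalence $\approx_\L$ fails to be a congruence. Since the only non-constant operator in $\Sigma$ is the unary $u$, compositionality reduces to verifying the implication $p \approx q \implies u(p) \approx u(q)$ for all closed terms $p, q$.

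For the forward direction, I would observe that, regardless of whether $M$ halts, both $i$ and $j$ diverge from every initial state: if $M$ halts they eventually fall into the non-terminating loops $0, c\to 0, c$ or $0, d\to 0, d$, and otherwise they cycle through the infinite run of $M$. Hence $i \approx j$ holds unconditionally. When $M$ halts, however, the programs $u(i)$ and $u(j)$ started from input $C_0$ terminate in the distinct states $1$ and $0$ respectively (as already computed in the proof of \autoref{thm:compositionality-undecidable1}), so $u(i) \not\approx u(j)$ witnesses failure of congruence.

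For the backward direction, assuming $M$ does not halt, I would show that $\approx_\L$ is a congruence by an explicit enumeration of equivalence classes. Every closed term has the form $u^n(q)$ for some $q \in \{i, j, c_0, c, d_0, d\}$ and $n \geq 0$, so the termination function $f_p \c S \to S + \{\bot\}$ of each such term can be computed by routine case analysis on the input state. This shows that the only non-singleton $\approx$-classes are $\{i, j\}$ and $E = \{u^n(p) : p \in \{i,j\},\, n \geq 1\}$. Congruence under $u$ is then immediate: applying $u$ to any representative of $\{i, j\}$ or of $E$ yields an element of $E$, while every other class is a singleton and hence trivially preserved.

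The main obstacle I anticipate lies in the backward direction. Termination semantics discards all intermediate-state information, so in general $f_{u(p)}(s)$ depends on the first-step \emph{target} of $p$ at input $s$ rather than on $f_p$ alone. Consequently, congruence cannot be derived by a generic argument analogous to the trace case; it has to be verified through direct inspection of the restricted catalogue of programs in our signature, exploiting the specific fact that $\approx$-related programs in this construction happen to have first-step targets whose subsequent evolution under $u$ produces identical termination outcomes.
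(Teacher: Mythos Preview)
Your proposal is correct and follows essentially the same route as the paper: both reuse the specification~$\L$ from the trace-semantics reduction and re-examine the computations listed there under termination semantics. The paper states the backward direction as ``$\approx$ is generated by $i \approx j$ and $u^m(i) \approx u^n(j)$ for $m,n\geq 1$,'' which is exactly your two non-singleton classes $\{i,j\}$ and $E$, and congruence under~$u$ is then immediate in both treatments.
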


\begin{proof}
  We use the same reduction $M\mapsto \L$ as in the proof of
  \autoref{thm:compositionality-undecidable1}. For the equivalence
  relation $\approx\,=\,\approx_\L$ we show that
  \[
    \text{$M$ halts on the empty tape}
    \qquad\text{iff}\qquad
    \text{$\approx$ is not a congruence}.
  \]
  The proof of ($\Longrightarrow$) is as before (with $\simeq$ replaced by
  $\approx$). In the proof of ($\Longleftarrow$) observe that the
  equivalence relation $\approx$ is now generated by the pairs
  \[
    i\approx j \qquad\text{and}\quad u^m(i)\approx u^n(j) \qquad
    (m,n\geq 1).
  \]
  This is immediate from the analysis of computations in
  \renewcommand{\itemautorefname}{items}%
  \autoref{T:cu:1}, \ref{T:cu:2} and~\ref{T:cu:4} of the proof of
    \renewcommand{\itemautorefname}{item}%
  \autoref{thm:compositionality-undecidable1}.
\end{proof}

\subsection*{Proof of \autoref{th:sl-congruence}}

\begin{rem}
  Recall the (dual of) the classical result by
  Ad\'amek~\cite{adamek74} that for every category~$\C$ with a
  terminal object $1$ and limits of $\omega^\opp$-chains and every
  endofunctor $B$ on $\C$ preserving such limits, the final coalgebra
  $\nu B$ is obtained as the limit of the $\omega^\opp$-chain
  \[
    1\xla{!} B1 \xla{B!} B^2 1 \xla{B^2!} \cdots \xla{B^{k-1}!} B^{k}1
    \xla{B^k!} B^{k+1} 1 \xla{B^{k+1}!} \cdots
  \]
  where $!$ is the unique morphism and $B^k$ means $k$-fold
  application of $B$. For the functor $BX=\stimes{(X+1)}$ on $\Set$,
  letting $S^{\leq k}\seq S^{+}$%
  \smnote{The double superscript $S^{+,\leq k}$ looks awful; let's not
    use this; our readers will survive with $S^{\leq k}$ since we
    explained what we mean.}
  denote the set of nonempty finite
  $S$-streams of length at most $k$, we have
  $B^k 1\cong S^{\leq k} + S^k$ and $\nu B\cong
  S^{+}+S^{\omega}$. The limit projection $\pi_k\colon \nu B\to B^k1$
  maps every string $w\in S^{\mplus}$ to its prefix of length at most
  $k$ in the first coproduct component and every stream in
  $S^\omega$ to its prefix of length $k$ in the second coproduct
  component.
\end{rem}
\begin{notn}
 For $k\in \Nat$, we put
  \[
    \sbrack{-}_k
    =
    (\mu\Sigma \xto{\sbrack{-}} (\nu B)^S \xto{\pi_k^S} (B^k1)^S).
  \]
  Thus, given $p\in \mu\Sigma$ and $s\in S$, the stream
  $\sbrack{p}_k(s)\in B^k 1 = S^{\leq k}+S^k$ is the $k$-step
  behaviour of the program $p$ on the input state $s$; note that
  $\sbrack{p}_k$ retains the information whether $p$ has terminated
  after at most $k$ steps or not. For $p,q\in \mu\Sigma$  and $s \in S$,
we put
  \[
    p \simeq_k q \quad\text{iff}\quad
    \sbrack{p}_k = \sbrack{q}_k
    \qquad\text{and}\qquad
    p \simeq_{k,s} q
    \quad\text{iff}\quad
    \sbrack{p}_k(s)=\sbrack{q}_k(s).
  \]
  Observe that
  \[
    p\simeq q
    \qquad\text{iff}\qquad
    p \simeq_k q
    \quad\text{for all $k$}.
  \]
\end{notn}
\begin{proof}[Proof of \autoref{th:sl-congruence}]
  To prove that $\simeq$ is a congruence, it suffices to prove
  that each $\simeq_k$ is a congruence. To this end, we shall
  establish the following slightly stronger statement:

  \subparagraph*{Claim.}  For every $k\in \Nat$, every $n$-ary operator $\f\in \Sigma$, and $p_m,q_m\in \mu\Sigma$ for $m=1,\ldots,n$,
  \begin{enumerate}
  \item if $\f$ is passive, then
    \[
      p_m\simeq_k q_m \quad(m=1,\ldots,n)
      \qquad\implies\qquad
      \f(p_1,\ldots,p_{n}) \simeq_k \f(q_1,\ldots,q_{n}),
    \]
  \item if $\f$ is active with receiving position $j\in \{1,\ldots, n\}$ and $s\in S$, then
    \[
      p_m \simeq_k q_m
      \ (m\in \{1, \ldots, n\}\smin \{j\}) \quad\text{and}\quad
      p_j\simeq_{k,s} q_j
      \;\;\implies\;\;
      \f(p_1,\ldots,p_{n}) \simeq_{k,s} \f(q_1,\ldots,q_{n}).
    \]
  \end{enumerate}
  The proof of the claim is by induction on $k$. The induction base
  ($k=0$) holds trivially because $p\simeq_{0,s} q$ and
  $p\simeq_{0} q$ for all programs $p,q$ and states $s$.

  Now for the induction step $k\to k+1$. Further below we write $:$
  for the prefixing operation on finite lists.

  \begin{enumerate}
  \item Suppose that $\f$ is passive and that $p_m\simeq_{k+1} q_m$
    for $m=1,\ldots,n$. We need to show
    $\f(p_1,\ldots,p_{n}) \simeq_{k+1} \f(q_1,\ldots,q_{n})$, that is,
    \[
      \sbrack{\f(p_1,\ldots,p_{n})}_{k+1}(s) = \sbrack{\f(q_1,\ldots,
        q_{n})}_{k+1}(s)
      \qquad\text{for every $s\in S$}.
    \]
    Given $s\in S$, the rule applying to $s$ and $\f$ is of the form
    \[
      \inference{}{\goes{s,\f(x_{1},\dots,x_{n})}{s', t}}\qquad\text{where $t \in \Sigmas({\{x_{1},\dots,x_{n}\})}$ or $t=*$}.
    \]
    If $t=*$, then
    \[
      \sbrack{\f(p_1,\ldots,p_{n})}_{k+1}(s)
      =
      s'
      =
      \sbrack{\f(q_1,\ldots, q_{n})}_{k+1}(s).
    \]
    Otherwise $t\in \Sigmas(\{x_1,\ldots, x_{n}\})$ and
    \begin{align*}
      \sbrack{\f(p_1,\ldots, p_{n})}_{k+1}(s)
      &= s' : \sbrack{t(p_1,\ldots,p_{n})}_{k}(s') & \text{def.~of $\sbrack{-}$} \\
      &= s': \sbrack{t(q_1,\ldots,q_{n})}_k(s') \\
      &= \sbrack{\f(q_1,\ldots, q_{n})}_{k+1}(s) & \text{def.~of $\sbrack{-}$}\mathrlap{,}
    \end{align*}
    where in the second step we use that $p_m\simeq_k q_m$ for
    $m=1,\ldots,n$ and that $\simeq_k$ is a congruence by induction.

  \item Suppose that $\f$ is active with receiving position $j$. Let
    $s\in S$ and $p_m \simeq_{k+1} q_m$ for
    $m\in \{1, \ldots, n\}\smin \{j\}$ and
    $p_j\simeq_{k+1,s} q_j$. We need to show
    $\f(p_1,\ldots,p_{n}) \simeq_{k+1,s} \f(q_1,\ldots,q_{n})$, that is,
    \[
      \sbrack{\f(p_1,\ldots,p_{n})}_{k+1}(s)
      =
      \sbrack{\f(q_1,\ldots,q_{n})}_{k+1}(s).
    \]
    Let $s,p_j\to s',p$ and $s,q_j\to s', q$ be the first computation
    steps of $p_j$ and $q_j$, respectively; since $p_j\sim_{1,s} q_j$, both
    steps lead to the same state $s'$.  Suppose first that the rule applying
    to~$s$ and~$\f$ is receiving and thus of the form
    \[
      \inference{\goes{s,x_{j}}{s',y}}{\goes{s,\f(x_{1},\dots,x_{n})}{s',t}} \qquad\text{where $t=\{\f(x_{1},\dots,x_{n})[y/x_{j}]\}$ or $t=y$}.
    \]
    If $t=\f(x_{1},\dots,x_{n})[y/x_{j}]$, then
    \begin{align*}
      \sbrack{\f(p_1,\ldots,p_{n})}_{k+1}(s)
      &= s': \sbrack{\f(p_1,\ldots,p_{j-1},p,p_{j+1},\ldots, p_{n})}_{k}(s') & \text{def.~of $\sbrack{-}$} \\
      &= s': \sbrack{\f(q_1,\ldots,q_{j-1},q,q_{j+1},\ldots, q_{n})}_{k}(s') \\
      &= \sbrack{\f(q_1,\ldots,q_{n})}_{k+1}(s) & \text{def.~of $\sbrack{-}$}\mathrlap{,}
    \end{align*}
    where in the second step we use that $p_m\simeq_k q_m$ for
    $m\neq j$ and $p \simeq_{k,s'} q$, which by induction implies
    \[
      \sbrack{\f(p_1,\ldots,p_{j-1},p,p_{j+1},\ldots, p_{n})}_{k}(s')
      =
      \sbrack{\f(q_1,\ldots,q_{j-1},q,q_{j+1},\ldots, q_{n})}_{k}(s').
    \]
    If $t=y$, then
    \begin{align*}
      \sbrack{\f(p_1,\ldots,p_{n})}_{k+1}(s)
      &= s': \sbrack{p}_{k}(s') & \text{def.~of $\sbrack{-}$} \\
      &= s': \sbrack{q}_{k}(s') & \text{since $p\simeq_{k,s'} q$} \\
      &= \sbrack{\f(q_1,\ldots,q_{n})}_{k+1}(s) & \text{def.~of $\sbrack{-}$}\mathrlap{.}
    \end{align*}
    Finally, suppose that the rule applying to $s$ and $\f$ is non-receiving and thus of the form
    \[
      \inference{l_{1} && l_{2} && \cdots && l_{n}}{\goes{s,\f(x_{1},\dots,x_{n})}{s',t}}\qquad\text{where $t \in \Sigmas(\{x_{1},\dots,x_{n}\} \smin \{x_{j}\})$ or $t=*$.}
    \]
     If $t=*$, we have
    \[
      \sbrack{\f(p_1,\ldots,p_{n})}_{k+1}(s)
      =
      s'
      =
      \sbrack{\f(q_1,\ldots, q_{n})}_{k+1}(s).
    \]
Otherwise,
    \begin{align*}
      \sbrack{\f(p_1,\ldots,p_{n})}_{k+1}(s)
      &= s' : \sbrack{t(p_1,\ldots, p_{j-1},p_{j+1},\ldots, p_{n})}_k(s') & \text{def.~of $\sbrack{-}$} \\
      &= s': \sbrack{t(q_1,\ldots, q_{j-1},q_{j+1},\ldots, q_{n})}_k(s') \\
      &= \sbrack{\f(q_1,\ldots,q_{n})}_{k+1}(s) & \text{def.~of $\sbrack{-}$}\mathrlap{,}
    \end{align*}
    where in the second step we use that $p_m\simeq_k q_m$ for
    $m\neq j$ and that $\simeq_k$ is a congruence by
    induction. \qedhere
  \end{enumerate}
\end{proof}

\subsection*{Proof of \autoref{cor:str-cong}}
Let $\L$ be a streamlined \isos{} specification over the signature $\Sigma$.
  \begin{enumerate}
  \item\label{cor:str-cong:1} Suppose first that the map
    $\brack{-}=\brack{-}_\L\colon \mu\Sigma\to \nu T$ is
    surjective. Let $\f\in \Sigma$ be an $n$-ary operator and suppose that
    $p_m,q_m\in \nu T$ ($m=1,\ldots,n$) satisfy $\trc(p_m)=\trc(q_m)$ for
    all $m$. Choose $\ol{p}_m,\ol{q}_m\in \mu\Sigma$ with
    $p_m=\brack{\ol{p}_m}$ and $q_m=\brack{\ol{q}_m}$. Then we have
    for every $m$ that
    \[
      \sbrack{\ol{p}_m}
      =
      \trc(\brack{\ol{p}_m})
      =
      \trc(p_m)=\trc(q_m)
      =
      \trc(\brack{\ol{q}_m})
      =
      \sbrack{\ol{q}_m}.
    \]
    Hence,
    $\sbrack{\f(\ol{p}_1,\ldots,\ol{p}_m)} =
    \sbrack{\f(\ol{q}_1,\ldots, \ol{q}_m)}$ since the kernel $\simeq$
    of $\sbrack{-}$ is a congruence by \autoref{th:sl-congruence}. It
    follows that
    \[
      \trc(\f(p_1,\ldots, p_m))
      =
      \trc(\f( \brack{\ol{p}_1}, \ldots, \brack{\ol{p}_n}))
      =
      \trc(\brack{\f(\ol{p}_1,\ldots, \ol{p}_n)})
      =
      \sbrack{\f(\ol{p}_1,\ldots, \ol{p}_n)}
    \]
    where the second equality uses that $\sbrack{-}$ is a
    $\Sigma$-homomorphism. Similarly
    \[
      \trc(\f(q_1,\ldots, q_m)) =\sbrack{\f(\ol{q}_1,\ldots, \ol{q}_n)}.
    \]
    Thus $\trc(\f(p_1,\ldots, p_m)) = \trc(\f(q_1,\ldots, q_m))$,
    proving that the kernel of $\trc$ is a congruence.

  \item We now show that the general case can be reduced to the
    situation in \autoref{cor:str-cong:1}.  To this end, we extend the given
    specification $\L$ to a specification $\L'$ as follows:
    \begin{itemize}
    \item Extend $\Sigma$ to a signature $\Sigma'$ by adding a
      constant symbol $c_t$ for every $t\in \nu T$.

    \item Add the new rules
    \[
      \inference{}{\goes{s,c_t}{s',c_{t'}}}
      \qquad\text{for every $s\in S$ and $t\in \nu T$},
    \]
    where $s'$ is the label of the edge from the root of $t$ to its $s$-th child and $t'$ is the corresponding subtree.
  \end{itemize}
  Clearly the extended specification $\L'$ is streamlined since
  constants are passive operators, on which the streamlined format
  puts no restrictions.  Moreover $\brack{c_t}_{\L'}=t$ for every
  $t\in \nu T$, hence the map $\brack{-}_{\L'}$ is surjective. By
  \autoref{cor:str-cong:1} applied to $\L'$ we know that the kernel of
  $\trc$ is congruence w.r.t.~the $\Sigma'$-algebra structure on
  $\nu T$. In particular, it is a congruence w.r.t.~its
  $\Sigma$-algebra structure.\qed
\end{enumerate}

\subsection*{Proof of \autoref{th:fa-congruence}}
  For every pair of complex terms $p, q \in \Sigmas(\mu\Sigma)$ we
  write $(p,q) \in C[\approx]$ if $p$ and $q$ have the same shape
  (i.e.~they are equal after forgetting the labels of
  $\mu\Sigma$-labelled leaves) and matching leaves from $\mu\Sigma$
  are pairwise $\approx$-equivalent.  To prove that $\approx$ is a
  congruence, we need to show
  \[
    (p,q) \in C[\approx]
    \implies
    \hat\ini(p) \approx \hat\ini(q)
    \qquad\text{ for all $p,q\in \mu\Sigma$},
  \]
  where $\hat\ini\colon \Sigmas (\mu \Sigma) \to \mu\Sigma$ is the term
  evaluation map. By symmetry, it suffices to prove the following
  statement for all $s,\ol{s}\in S$:
  \begin{equation}\label{eq:ast}
    \text{If $(p,q) \in C[\approx]$ and $s,\hat\ini(p)$ terminates in state
      $\ol{s}$, then $s,\hat\ini(q)$ terminates in state
      $\ol{s}$.}
  \end{equation}
  The proof proceeds by induction on the number $k$ of steps until
  termination of $s,\hat\ini(p)$. Note that we count a terminating step
  $s,r\downarrow \ol{s}$ as one computation step.

  \medskip\noindent For $k=0$ the statement is vacuously true since no
  program terminates after $0$ steps.

  \medskip\noindent For the induction step, let $k>0$ and suppose that
  $s,\hat\ini(p)$ terminates after $k$ steps in the state $\ol{s}$. We need
  to show that also $s,\hat\ini(q)$ terminates in $\ol{s}$. This is shown
  by induction on the structure of $p$:
  \begin{itemize}
  \item If $p$ is a leaf of the term tree, i.e.~an element of
    $\mu\Sigma$, then $(p,q)\in C[\approx]$ implies that~$q$ has the
    same form and $p\approx q$. It follows that
    $\hat\ini(p)\approx \hat\ini(q)$, since $\hat\ini(p)=p$ and
    $\hat\ini(q)=q$.

  \item Now suppose that $p = \f(p_{1},\dots,p_{n})$ and
    $q = \f(q_{1},\dots,q_{n})$ for some $n$-ary operator $\f$,
    $n>0$. From $(p,q)\in C[\approx]$, it follows that
    $(p_i,q_i)\in C[\approx]$ for $i=1,\dots,n$. We distinguish two
    cases:

    \begin{enumerate}
    \item The operator $\f$ is passive. Then $s,p$ and $s,q$ trigger the same rule
      \[
        \inference{}{\goes{s,\f(x_{1},\dots,x_{n})}{s', t}}\qquad\text{where $t\in \Sigmas({\{x_{1},\dots,x_{n}\})}$ or $t=*$}.
      \]
      If the rule is terminating ($t=*$) then both $\hat\ini(p)$ and~$\hat\ini(q)$
      terminate after one step in state $\ol{s}=s'$. Otherwise,
      their respective first computation steps are
      \[
        s,\hat\ini(p)\to s',\hat\ini(p')
        \qquad\text{and}\qquad
        s,\hat\ini(q)\to s',\hat\ini(q')
      \]
      for $p',q'\in \Sigmas(\mu\Sigma)$ given by
      $p'=t[p_1/x_1,\ldots,p_{n}/x_{n}]$ and
      $q'=t[q_1/x_1,\ldots, q_{n}/x_n]$. Since
      $(p_i,q_i)\in C[\approx]$ for $i=1,\dots,n$, we have
      $(p',q')\in C[\approx]$. Moreover, $s',\hat\ini(p')$ terminates
      in the state $\ol{s}$ in strictly less than $k$ steps, so by
      induction, $s',\hat\ini(q')$ terminates in the state
      $\ol{s}$. It follows that also $s,\hat\ini(q)$ terminates in the state $\ol{s}$.

    \item The operator $\f$ is active with receiving position
      $j$. Then $s,\hat\ini(p)$ and $s,\hat\ini(q)$ will have to
      execute their subterms $\hat\ini(p_j)$ and $\hat\ini(q_j)$,
      respectively.  Executing $s,\hat\ini(p_j)$ requires at most~$k$
      steps and terminates in some state $s'$. Since
      $(p_j,q_j)\in C[\approx]$, the inner induction hypothesis
      implies that $s,\hat\ini(q_j)$ also terminates in $s'$. Upon
      termination of~$\hat\ini(p_j)$ or~$\hat\ini(q_j)$, respectively, both
      $s,\hat\ini(p)$ and $s,\hat\ini(q)$ therefore trigger a rule
      \[
        \inference{\rets{?,x_{j}}{s'}}{\goes{?,\f(x_{1},\dots,x_{n})}{s'',t}}\qquad\text{      where
      $t \in \Sigmas(\{x_{1},\dots,x_{n}\} \smin \{x_{j}\})$ or $t=*$.}
      \]
      Note that while the input $?$ might be different for the two
      computations, the output $s''$ and target $t$, respectively, are the same for both
      computations because they only depend on $s'$.

      If the above rule is terminating ($t=*$), then $s,\hat\ini(p)$ and
      $s,\hat\ini(q)$ both terminate in the state $\ol{s}=s''$, and we
      are done. Otherwise, $s,\hat\ini(p)$ evolves to
      $s'',\hat\ini(p')$, and $s,\hat\ini(q)$ evolves to
      $s'',\hat\ini(q')$ in finitely many steps, where
      $p',q'\in \Sigmas(\mu\Sigma)$ are given by
      $p'=t[p_m/x_m : m\neq j]$ and $q'=t[q_m/x_m : m\neq j]$;
      moreover, $s'',\hat\ini(p')$ terminates in state $\ol{s}$ in
      strictly less than~$k$ steps. Since $(p_i,q_i)\in C[\approx]$
      for all~$i$, we have $(p',q')\in C[\approx]$. By the outer
      induction hypothesis, we conclude that $s'',\hat\ini(q')$
      terminates in $\ol{s}$. Consequently, also $s,\hat\ini(q)$
      terminates in $\ol{s}$.\qed
    \end{enumerate}
\end{itemize}

\subsection*{Proof of \autoref{cor:fns-cong}}
The proof is completely analogous to the one of
\autoref{cor:str-cong}; just note that extending a cool \isos{}
specification by constants preserves coolness. \qed

\end{document}
